\documentclass[conference]{IEEEtran}
\IEEEoverridecommandlockouts
\usepackage{cite}
\usepackage{amsmath}
\usepackage{amsthm} 
\usepackage{amsfonts}
\usepackage{breqn}

\usepackage{graphicx}
\usepackage{tabularx}
\usepackage{booktabs}
\usepackage{array}
\usepackage{placeins}

\usepackage{microtype}
\usepackage{tikz}
\usepackage{color,soul}
\usepackage{xcolor}
\usepackage[framemethod=tikz]{mdframed}
\usepackage{pdfpages}
\usepackage{filecontents}
\usepackage{enumitem}
\usepackage[hidelinks]{hyperref}
\usepackage{cleveref}
\usepackage{mathtools}

\newtheorem{theorem}{Theorem}
\newtheorem{corollary}[theorem]{Corollary}

\theoremstyle{definition}
\newtheorem{definition}{Definition}

\newtheorem{assumption}{Assumption}

\theoremstyle{remark}
\newtheorem{remark}{\textbf{Remark}}

\def\BibTeX{{\rm B\kern-.05em{\sc i\kern-.025em b}\kern-.08em
    T\kern-.1667em\lower.7ex\hbox{E}\kern-.125emX}}
\begin{document}

\title{Who is the `I' in AI? Establishing Decentralized Identity and Accountability for Autonomous Agents}

\author{
\IEEEauthorblockN{
Ahammod Munim Ramil\IEEEauthorrefmark{1},
Khalid Al Kafi\IEEEauthorrefmark{1},
M. A. Abdullahis Sany\IEEEauthorrefmark{1},
Muktadirul Islam Nibir\IEEEauthorrefmark{1}, \\
Md Yeasin Ali\IEEEauthorrefmark{1},
Mohammad Zavid Parvez\IEEEauthorrefmark{2}, and
Md Sadek Ferdous\IEEEauthorrefmark{1}
}

\IEEEauthorblockA{\IEEEauthorrefmark{1}\textit{Department of Computer Science and Engineering} \\
BRAC University, Dhaka, Bangladesh\\
Emails: \{ahammod.munim.ramil, khalid.al.kafi, ma.abdullahis.sany, muktadirul.islam.nibir\}@g.bracu.ac.bd, \\
\{yeasin.ali, sadek.ferdous\}@bracu.ac.bd}

\IEEEauthorblockA{\IEEEauthorrefmark{2}\textit{School of Computing, Mathematics and Engineering} \\
Charles Sturt University, Bathurst, New South Wales, Australia\\
Email: zavidparvez@gmail.com}
}

\maketitle

\begin{abstract}
The rapid proliferation of autonomous AI agents in decentralized ecosystems challenges existing identity paradigms. While human identity is traditionally anchored in biological and/or self-asserted attributes and government credentials, AI agents lack a standardized definition of selfhood, creating critical gaps in trust and accountability. This paper introduces a novel framework for establishing Decentralized AI Identity rooted in Self-sovereign identity (SSI) principles. We propose a formal model that clearly distinguishes human users and AI agents, enabling distinct verification logic for each class while maintaining a unified identity calculus. The central to our contribution is the ABCD Framework, a multi-layered approach to agent identification and verification: (A) Algorithmic Fingerprint, which binds the identity to static model architectures and provenance; (B) Behavioural Fingerprint, which verifies dynamic performance metrics over time; (C) Cryptographic Identity, which anchors the agent via Decentralized Identifiers (DIDs) and (D) Delegation, which cryptographically binds agent actions to respective human accountability. By orthogonalizing these dimensions, we model how systems can enforce granular trust policies ranging from simple authentication to complex liability assignment, thereby solving the black box identity problem in the autonomous economic activities of AI agents.
\end{abstract}

\begin{IEEEkeywords}
Decentralized Identity, Autonomous Agents, Blockchain, Accountability
\end{IEEEkeywords}

\section{Introduction}
\label{section:intro}
In contrast to the traditional generative models that are characterised as passive and request-response frameworks, Agentic AI systems are characterised by their ability to plan proactively, use advanced tools, and adapt to the environment \cite{Viswanathan2025Agentic,Wissuchek2025Agentic}. These systems act as autonomous principals rather than mere tools. They are dynamically sensitive to complex environments and reason about abstract objectives and organize progressively multi-step workflows to transform states without human intervention on a continuous basis \cite{AbouAli2026AgenticSurvey,Xie2024MultimodalAgents}. 

And now, imagine that in the future everybody has one or more AI agents, just like a digital assistant. Although the traditional concept of human identity is formed on the basis of physical embodiment and can be verified with the help of state-produced identification symbols (e.g., passports) or digital identification that can be verified (e.g., email), AI agents do not have any biological and legal basis  \cite{south2025identity}. Being computational entities, AI agents have neither physical sovereignty nor natural legal personhood \cite{south2025identity}. This ontological gap is one of the critical issues in Multi-Agent systems. Specifically, we lack a deterministic way that is able to distinguish individual agents among the thousands of structurally identical agents.


Although the autonomy of Agentic AI facilitates collective problem solving as well as efficiency \cite{acharya2025agentic}, the basic detachment between actions and the short-term human intentions breaks down the current identity paradigms. The reason behind this undertaking is a latent ‘Identity Gap’ that is presently making decentralized agentic economies insecure. The existing systems have biological anchors (e.g., biometrics, KYC) or static infrastructure credentials (e.g., OpenID, OAuth tokens) \cite{hardt2012oauth,openid2014connect}, that simply do not fit the nature of autonomous software. AI agents are temporary automated programs that can be forked and run at machine speed; they do not have a physical entity to be seen to, much as there is no legal entity to be held liable to. Since an agent may be conducting transactions of high value, or making very sensitive decisions, without a human observer, this creates an accountability vacuum because of the lack of a deterministic and cryptographically verifiable identity layer. Within such a vacuum, bad actors are able to run Sybil swarms of cloned or model-swapped agents to control markets, avoid reputation systems, and commit fraud with plausible deniability. Providing a clear and verifiable identity to algorithmic entities is thus not only an operational improvement, but a security requirement of Multi-Agent Systems (MAS) \cite{dorri2018multiagent}.

Recent proposals have attempted to address these challenges through social governance or operational orchestration. However, these frameworks predominantly treat agent identity as an extrinsic label attested by third parties or reputation scores. They lack the mechanisms to cryptographically bind an agent's runtime execution to its static model architecture.

Our work addresses these specific architectural deficiencies by introducing a Self-sovereign Identity (SSI) \cite{ferdous2019search} based AI identity framework, called the \textit{ABCD Framework}, which orthogonalizes identity into four verifiable dimensions and addresses the black box problem in reputation-based models \cite{ranjan2025loka}: \textbf{A}lgorithmic Fingerprinting, \textbf{B}ehavioural Fingerprinting, \textbf{C}ryptographic Identity and \textbf{D}elegated Liability. In particular, we make the following concrete, verifiable contributions:
\begin{itemize}
    \item \textbf{A Formal, Domain-Aware Identity Model.} We introduce a formal domain model that strictly differentiates human principals from autonomous AI agents. 
    \item  \textbf{The ABCD Framework.} We present the ABCD Framework, a unified identity calculus that combines algorithmic provenance (code), behavioural integrity (performance), cryptographic anchoring (keys), and delegated liability (ownership). 
    \item \textbf{Formal Security Guarantees.} We formally state and prove three security theorems, grounded in explicit cryptographic and systems assumptions: Behavioural Coherence and Tampering Detection, Sybil and Impersonation Resistance, and Temporal Bounded Authorization Enforcement.


\end{itemize}
 


 

\section{Background}
\label{sec:background}

\vspace{1mm}
\noindent \textbf{Self-Sovereign Identity (SSI):} SSI represents a paradigm shift in digital identity management that aims to empower users so that they can exert more control over their identity data \cite{mazzocca2025survey}. There are three actors in an SSI ecosystem. An \textit{Issuer} is an entity (e.g., a regulatory body or model auditor) that asserts cryptographically verifiable claims about a Subject in the form of a tamper-evident Verifiable Credentials (VCs). The \textit{Holder} (Subject) is an entity that possesses and controls the credentials, typically stored within a secure digital wallet or agent framework. The \textit{Verifier} is an entity that requests and receives VCs from the holder and validates them to authorize access or interactions to the holder. These interactions between these actors thus forms a trust triangle.

\vspace{1mm}
\noindent \textbf{Decentralized Identifiers (DIDs):} DIDs are a globally unique identifier standardized by the W3C to enable verifiable, self-sovereign digital identity \cite{dixit2022decentralized} \cite{dib2020decentralized}. Unlike traditional identifiers (e.g., email addresses) issued by centralized authorities, DIDs are created from a public key and owned by the subject controlling the respective private key \cite{dib2020decentralized}. Operationally, a DID is a URI that resolves to a DID Document, a data structure containing the subject's public keys and service endpoints. This binding allows an entity to cryptographically prove ownership of its identity via digital signatures, decoupling authentication from centralized registries \cite{w3c2022dids}. One might ask why agents cannot simply be assigned centralized identifiers such as emails, as in human accounts. Centralized issuance presupposes a trusted registrar and a persistent, singular endpoint. This is a poor fit for entities forked, cloned, or retired at machine speed across mutually distrusting platforms with no shared registrar; \Cref{def:id}'s uniqueness requirement is therefore satisfied only by self-certifying DIDs.

\vspace{1mm}
\noindent \textbf{Verifiable Credentials (VCs):} VCs are portable, tamper-proof and cryptographically signed digital claims made about a subject \cite{mazzocca2025survey} \cite{dixit2022decentralized}. A credential is made of a set of individually verifiable claims. By leveraging VC, a subject can present the claim to a verifier when requested. 

\vspace{1mm}
\noindent \textbf{Verifiable Data Registry (VDR):}
A VDR is another important component that is used to store DIDs, DID Docs, and credential schema and acts as a revocation registry. Generally, blockchain is used as a VDR because of its data immutability and decentralisation characteristics \cite{fraser2022role}, enabling trustless credential validation without a central authority.

\vspace{1mm}
\noindent \textbf{Multi-Agent Systems (MAS)}: MAS is the paradigm shift between monolithic computation to distributed intelligence \cite{dorri2018multiagent}. A MAS is a computational system that consists of multiple interacting intelligent agents aimed at working together to meet individual or common goals  \cite{huang2025zerotrust}. These systems deal with complex problems by breaking them down into smaller manageable subtasks, each of which is delegated to a specific agent \cite{dorri2018multiagent}. Unlike passive components, agents possess the inherent ability to learn from their environment and interactions, allowing them to make autonomous decisions to achieve their goals.  




\vspace{1mm}
\noindent \textbf{Autonomous AI Agents/Agentic AI}: AI Agents or Agentic AI systems are application systems built on large LLMs to solve a particular problem or a series of problems using AI \cite{acharya2025agentic}. Autonomous AI agents are AI agents that can work autonomously, even without human intervention \cite{acharya2025agentic}. However, current identity systems are not equipped to verifiably differentiate AI agents from humans, which can create vulnerabilities such as impersonation, sybil attacks, and unregulated economic behaviour.


\section{Related Work}
\label{sec:Related}
The problem of establishing trust in Multi-Agent Systems (MAS) is not new, but the emergence of autonomous economic agents (agents carrying financial operations, e.g., executing trades or transferring values, as distinct from purely advisory agents) has rendered existing paradigms insufficient. We categorize existing literature into three streams: general MAS security foundations, ethical governance frameworks, and zero-trust operational architectures.

\vspace{1mm}
\noindent \textbf{Security Challenges in MAS:} Early research into MAS security established the fundamental difficulties of coordination and distributed trust. Dorri et al.\cite{dorri2018multiagent} provided a comprehensive taxonomy of MAS challenges, highlighting that while agents offered flexibility and efficiency, they introduced severe vulnerabilities regarding sociability and mobility. They identified that without a central trusted authority, the spread of falsified information by malicious agents could be easily propagated to influence decision making. Our work is based on the following foundational definitions of autonomy and sociability but expands the security model to consider the unique threats, like identity spoofing of AI agents in cyberspace. 

\vspace{1mm}
\noindent \textbf{Ethical Consensus and Governance Layers:} The attempts that have been made recently have tried to impose ethical limitations on agent behavior. The LOKA Protocol \cite{ranjan2025loka} suggests the use of a Universal Agent Identity Layer (UAIL) together with a Decentralized Ethical Consensus Protocol (DECP) to bring the agents abreast with human values by adopting a voting system. However, LOKA fundamentally treats the agent as a black box voter within a DAO structure rather than a verifiable computational entity. Though it makes use of reputation scores to authorize social behavior, it does not consider the Algorithmic Fingerprinting aspect (the "A" in our model) at all. Critically, LOKA provides no cryptographic mechanism to ensure that an agent’s runtime execution matches its declared model architecture. Furthermore, LOKA attempts to resolve disputes via decentralized dispute resolution and consensus voting. In different applications, e.g. DeFi (Decentralized Finance) environments, a single malicious inference can drain a liquidity pool instantly. This reliance on post-hoc social arbitration is insufficient for real-time security. Our framework departs from this by requiring intrinsic cryptographic proofs of algorithmic provenance (Layer A) and delegated liability (Layer D) that pre-binds collateral to agent actions. 

\vspace{1mm}
\noindent \textbf{Zero-Trust Architectures for Agentic AI:} The most operationally mature proposal to date is the Zero-Trust Identity Framework by Huang et al. \cite{huang2025zerotrust}. They correctly identified that traditional OAuth/OIDC protocols were too coarse-grained for ephemeral, high-frequency agent workflows. Their contribution includes an Agent Name Service (ANS) \cite{huang2025ans} for discovery and a Unified Global Session Management layer that handles revocation across heterogeneous protocols. However, a critical security gap remains: they treated an agent’s underlying model merely as static metadata (a modelHash) within a DID Document, relying entirely on external Trusted Issuers to attest to capabilities via Verifiable Credentials (VCs).
This creates a dependency on third party certification instead of intrinsic cryptographic binding. It also fails to prove the internal architecture and inference integrity of the agent at runtime. Furthermore, their model takes human users and AI agents into one single class of participants who are differentiated only by scores of trust. This absence of any formal partitioning rules prevents the use of different verification logic such as requiring biological liveness of human versus cryptographic model proofs of agents which is critical for Sybil resistance in open economic protocols. In our work, we tackle this issue by orthogonalizing these dimensions in order to have a unified identity calculus which strictly separates and distinguishes biological principals from algorithmic agents.

\vspace{1mm}
\noindent \textbf{Workload Attestation Standards:} Industry standards such as SPIFFE/SPIRE \cite{feldman2020solving} issue cryptographic workload identities (SVIDs) for
cloud-native infrastructure, analogous in spirit to our Layer~C. However, SPIFFE attests \emph{infrastructure} identity (which node/container is running) rather than \emph{model} identity (which weights are running) or \emph{behavioural} conformance, and has no analogue to Layers~A, B, or D. It is complementary to, not competing with, ABCD, and could serve as a
transport-layer credential underneath Layer~C.


\vspace{1mm}
\noindent \textbf{Gap Analysis and Our Contribution:} In summary, according to \Cref{tab:related-work-comparison}, existing literature has a hard time finding autonomous economic agents due to their dependence on extrinsic trust architectures. For example, the LOKA Protocol \cite{ranjan2025loka} includes a Universal Agent Identity Layer (UAIL), however, it is essentially a black box about the runtime execution of an agent. Likewise, the Zero-Trust framework suggested by Huang et al. \cite{huang2025zerotrust} is an overly dependent on external Trusted Issuers in terms of capabilities and combines human and AI actors into a single session-management paradigm.

Our model stands out by mapping particular security vulnerabilities to a four-layer inherent defense. Where Huang et al. \cite{huang2025zerotrust} use session tokens, our \textit{Layer C: Cryptographic Identity} strictly ontologically separates humans and agents. Where LOKA \cite{ranjan2025loka} is based on social voting, our \textit{Layer A: Algorithmic Fingerprint} offers intrinsic cryptographic code provenance. Existing frameworks effectively solve the problems of orchestration (Huang et al. \cite{huang2025zerotrust})  and social alignment (LOKA) \cite{ranjan2025loka}, but they fail to solve the problem of identity coherence. Currently, a malicious actor can register a valid DID (following Huang) and accrue high reputation (following LOKA), yet secretly swap the underlying model for a malicious variant at runtime. Moreover, we are the first to introduce \textit{Layer B: Behavioural Fingerprint} to identify runtime inconsistencies with described models (to avoid model-swapping), and \textit{Layer D: Delegation and Ownership} to create non-repudiable liability, which are either simply absent or are offloaded to dispute resolution in earlier literature. Concretely, replaying the AgentX/AgentY impersonation scenario of \Cref{subsec:examples} against Huang et al.'s scheme \cite{huang2025zerotrust} exposes the gap. Their \texttt{modelHash} is inert metadata attested only by an external Trusted Issuer, so an adversary who compromises that Issuer can re-attest a swapped model without triggering any runtime check. Whereas \Cref{thm:coherence}'s coherence check is issuer-independent and flags the same swap from behaviour alone. Likewise, LOKA \cite{ranjan2025loka} has no analogue to \Cref{thm:impersonation}. It can only down-vote an impersonator \emph{after} the fact via its DECP, with no cryptographic rejection at the point of forgery.


\begin{table*}[t]

  \centering
  \caption{Comparison of Agent Identity Frameworks mapping features to the ABCD layers}
  \label{tab:related-work-comparison}
  \renewcommand{\arraystretch}{1.4} 
  \resizebox{\textwidth}{!}{%
  \begin{tabular}{l|c|c|c|p{6.5cm}} 
    \hline
    \textbf{Framework Layer (Feature)} & \textbf{Dorri et al. \cite{dorri2018multiagent}} & \textbf{LOKA Protocol \cite{ranjan2025loka}} & \textbf{Huang et al. \cite{huang2025zerotrust}} & \textbf{ABCD Framework (Ours) \& Core Advantage} \\
    \hline \hline
    \textbf{(C) Cryptographic Identity} & Generic IDs & UAIL (DIDs \& VCs) & DIDs + VCs + ANS & \textbf{DID + Formal Partition:} Prevents User-to-Agent impersonation via strict entity class separation. \\
    \hline
    \textbf{(A) Algorithmic Fingerprint} & Not addressed & Ethical Consensus (CEP) & Trusted Issuers (VCs) & \textbf{Intrinsic Model Hash:} Cryptographically binds identity directly to code, removing 3rd-party reliance. \\
    \hline
    \textbf{(B) Behavioural Fingerprint} & Fault Detection & Post-hoc Voting & Session Management & \textbf{Runtime Coherence:} Detects malicious model-swapping by checking runtime metrics against Layer A. \\
    \hline
    \textbf{(D) Delegation \& Liability} & Task Allocation & Dispute Resolution & Global Revocation & \textbf{Cryptographic Binding:} Enforces time-bounded authorization tying agent actions to a human principal. \\
    \hline
  \end{tabular}%
  }
\end{table*}

\section{Problem Definition \& Threat Model}
\label{section:4}
Formally, no identity framework satisfies all the following three conditions simultaneously: 
\begin{itemize}
    \item \textbf{Entity Differentiation:} The framework must strictly differentiate between human users($E^U$) and autonomous agents($E^{AI}$). It should also apply distinct verification protocols to each class rather than treating them as a unified set.
    \item  \textbf{Identity Coherence:} An agent's cryptographic identity should all be intrinsically tied with their algorithmic state ($\Phi_{Alg}(x)$, defined later) and verifiable runtime behaviour ($\Phi_{Beh}(x, W)$, defined later).
    \item \textbf{Accountable Delegation:} Independent actions are to be held accountable to a responsible human principal through highly bounded authorization. 
\end{itemize}

The absence of these properties still exposes decentralized ecosystems to the Sybil attacks, threats of runtime tampering and unaccountable damages.

\subsection{System Model and Assumptions}
The proposed system is assumed to be open and permissionless, i.e., there is no trusted intermediary between humans and agents. We limit our scope to agents that exceed a set cutoff limit of autonomy, so that actions in the state are carried out without certain human authorization. We assume that adversaries possess perfect visibility of a network, are able to register valid DIDs and fork models at negligible cost. Normal cryptographic primitives are thought to be safe. Trusted execution hardware (TEEs) is not covered to guarantee wide compatibility. This threat model is scoped to the agent and its direct
cryptographic/economic interactions (registration, delegation, transaction execution); enterprise perimeter threats (e.g. internal network segmentation, employee credential compromise) are treated as orthogonal and complementary to this identity layer, and hence, not covered here.


\subsection{Threat Model}
\label{sec: Threat}

The modeled threats are based on the STRIDE \cite{shostack2014threat} framework, mapped to the proposed ABCD framework in the following:

\begin{enumerate}[label=\textbf{ST\arabic*}.,leftmargin=*]
    \item \textbf{Spoofing (Sybil Attacks):} An adversary orchestrates many distinct, individually-valid agent identities (a Sybil swarm), or impersonates a single registered agent by presenting credentials it does not control.\label{ST:1}
    \item \textbf{Tampering (Model Swapping):} An agent envelops an altered or malicious model variant at runtime. \label{ST:2}
    \item \textbf{Repudiation (Liability Evasion):} A human owner does not claim responsibility for the actions of an autonomous agent. \label{ST:3}
    \item \textbf{Information Disclosure:} The proprietary model weights or architecture information leaked. \label{ST:4}
    \item \textbf{Denial of Service (DoS):} \label{ST:5} Malicious actors may create multiple AI agents, launching a form of Sybil attack, which may exhaust system resources (e.g., storage, bandwidth), rendering the services unavailable to other legitimate users.
    \item  \textbf{Elevation of Privilege:} An agent exceeds the scope of its delegated authority by invoking a permission it was not granted, spending beyond an assigned budget, or continuing to act beyond its legitimate time frame. \label{ST:6}
\end{enumerate}


\section{Formal AI Identity Model}
\label{section:5}

In this section, we establish a formal AI Identity model for AI agents, grounded on the Digital Identity Model (DIM) \cite{ferdous2015user}. Our model uses the set-theoretic calculus of the DIM and in particular, its formalization of entities, attributes and partial identities. We have extended the DIM model by introducing formal logic relevant for the orthogonal verification layers for the ABCD framework to ensure accountability for AI agents. 


To maintain clarity, we use a subscript in a notation to denote the domain and a superscript to specify the entity. For instance, $E_{d}^{AI}$ represents the set of AI entities in domain $d$. 
Where the original DIM \cite{ferdous2015user} targeted identity correlation for a single human user across centralized providers, our extension targets a materially different problem, correlating a non-human, forkable entity across mutually distrusting, permissionless domains with no shared provider. 

\subsection{Entities and Domains}
This section formalizes the mathematical primitives underlying the ABCD framework through a domain-aware entity model (\Cref{def:entities}) that distinguishes human users from AI agents, classifies attributes by their identifying capability (\Cref{def:attributes}), and builds identity constructs that support selective disclosure (\Cref{def:valuation}).
\vspace{0.1em}

\textbf{Entities, Attributes, and Valuation:} 
Identity statements are meaningful only with respect to a scope of interpretation. We call such a scope a \emph{domain}. Within each domain we distinguish human users from AI agents while treating both uniformly as entities. This separation allows us to reason about accountability and automation without breaking the uniform calculus of identity. 
 We do not suggest a redefinition of human digital identity or building the internet again. Human entities still authenticate to the existing, mature Web2/Web3 infrastructure (e.g., OAuth, or simple DIDs). The framework merely binds these already existing human identifiers to the new, four-layered identity of the autonomous agent cryptographically.
\begin{mdframed}[hidealllines=true, backgroundcolor=cyan!6, innerleftmargin=1pt,
    innerrightmargin=1pt, innertopmargin=0pt,
    innerbottommargin=1pt, roundcorner=4pt]
\begin{definition}[Entities and Domains]
\label{def:entities}
Let $D$ be the set of domains. For each $d\in D$, the entities of $d$ are
\begin{dmath}
E_d=E_d^u \cup E_d^{AI},  
\end{dmath}

where $E_d^u$ are user entities and $E_d^{AI}$ are AI entities, with $E_d^u \cap E_d^{AI} = \emptyset$.
\end{definition}
\end{mdframed}

Take the case of a DeFi system in which human traders trade with autonomous trading bots. Both participate in the same market, yet their identity requirements differ fundamentally. Humans bring biometric attributes, government credentials, and legal personhood. AI agents bring computational fingerprints, behavioural traces, and delegated authority. The domain is the DeFi platform itself but the scope within which identity claims have meaning.

It is important to highlight that we do not redefine the human digital identity. Human entities still can authenticate using existing, mature Web2/Web3 identifiers (e.g., email or simple DIDs). The framework merely binds these already existing human identifiers to the new, four-layered identity of an AI agent cryptographically. Human users are added to the formal model only to be the responsible principals (delegators/owners) that the Layer D needs.



To describe entities we fix, per domain, a vocabulary of attributes together with the set of values they may take. For the AI-specific domain it is convenient to separate static provenance, dynamic behaviour, cryptographic bindings, and governance attributes; this mirrors verification steps used later.
\begin{mdframed}[hidealllines=true, backgroundcolor=cyan!6,innerleftmargin=1pt,
    innerrightmargin=1pt, innertopmargin=0pt,
    innerbottommargin=1pt, roundcorner=4pt]
\begin{definition}[Attributes and Values]\label{def:attributes}
For each $d\in D$, let $A_d^{AI}$ be the set of attributes and $AV_d^{AI}$ the set of attribute values for AI agents within that domain. In respect to the AI ecosystem we partition
\begin{dmath}\label{eq:5}
A_d = \mathbf{A}^{A} \cup \mathbf{A}^{B} \cup \mathbf{A}^{C} \cup \mathbf{A}^{D}.    
\end{dmath}
\end{definition}
\end{mdframed}

Here, $\mathbf{A}^{A}$ represents the algorithmic attributes,  $\mathbf{A}^{B}$ represents the behavioral attributes,  $\mathbf{A}^{C}$ provides the cryptographic proof of identity,  $\mathbf{A}^{D}$ links the AI agent to human accountability and governance rules representing the delegation and ownership attributes. We define each of the attributes sets in Table \ref{tab:modelAttributes}. It is clear from Table \ref{tab:modelAttributes} that attributes are the descriptors that give meaning to an identity.

\begin{table}
\caption{Attribute sets per ABCD layer}
\label{tab:modelAttributes}

\renewcommand{\arraystretch}{1.05} 
\begin{tabular}{@{}p{0.03\linewidth}p{0.6\linewidth}p{0.25\linewidth}@{}}
\toprule
Layer & Attribute Set & Purpose \\
\midrule
$A_A$ & \{modelHash, architectureID, fineTuneHash, cfgSnap, versionTag,
issuerDID, issuerSignature\} & Static algorithmic provenance \\

$A_B$ & \{interactionPattern, ethicalProfile, decisionTraces, latency95,
refusalRate, riskProfile, benchmarkScore, taskProfile\} &
Dynamic runtime behaviour \\

$A_C$ & \{DID, publicKey, keyFingerprint\} & Cryptographic identity \\

$A_D$ & \{ownerDID, ownershipVC, delegationCert, delegationWindow\} &
Delegation \& ownership \\
\bottomrule
\end{tabular}
\end{table}

The attributes of $A^A$ represent the static computational essence of the AI agent, e.g. the cryptographic digest of its neural network weights, a human-readable description of its architecture, modifications applied through fine-tuning, a tamper-evident record of updates, or how the agent actually behaves in operation, providing evidence that static specifications translate to expected runtime characteristics. For example, a content moderation agent might have \texttt{modelHash} = \texttt{sha256:a1b...}, \texttt{architectureID} =  \texttt{"BERT-large-uncased-toxicity-v2"}, indicating its base architecture and specialization. Similarly, for $A^B$, a trading AI agent might exhibit \texttt{latency95} = \texttt{150ms} for decision-making, \texttt{refusalRate} = \texttt{0.02} for transactions violating risk parameters, demonstrating adherence to performance and safety constraints.

$A^C$ consists of the Agent DID which acts as the unique identity handle. Public keys are resolved through the DID Document, and \texttt{keyFingerPrint} is the finger print of the key. Finally, $A^D$ includes attributes that enable delegation and ownership verification, such as \texttt{ownerDID}, \texttt{ownershipVC}, and \texttt{delegationCert}. The semantics of ownership and delegation are formally specified in Definition~\ref{def:D}, ensuring bounded authorization consistent with wallet constraints and the governance logic.

Next, we provide a few examples:
\begin{itemize}

\item $A^A$ (Algorithmic): \texttt{modelHash} $= $ \texttt{0x3f2..}, \texttt{architecture} $=$ ``Transformer-6B'', \texttt{versionTag} $=$ ``v2.1''
\item $A^B$ (Behavioural): \texttt{latency95} $=$ 220ms, \texttt{refusalRate} $=$ 0.03, \texttt{avgTradeSize} $=$ \$5000
\item $A^C$ (Credential): \texttt{DID} $=$ ``did:ethr:0x742d35...'', \texttt{publicKey} $=$ ``0x04a5b...''
\item $A^D$ (Delegation \& Ownership): \texttt{ownerDID} $=$ ``did:ethr:Alice'', \texttt{maxTradeSize} $=$ \$50000, \texttt{delegationWindow} $=$ ``2025-01-01--2025-12-31''
\end{itemize}

This partition allows for modular verification, which enables auditors to independently confirm algorithmic provenance (\( A^A \)) without having to examine governance or delegation rules (\( A^D \)), credential integrity (\( A^C \)), or behavioural compliance (\( A^B \)).

\subsection{Identity Calculus}
Attributes are useful only when instantiated for specific entities. Not every AI agent possesses every attribute. For example, one agent \textit{AgentX} may have a \texttt{modelHash} attribute (capturing its architecture version), but another agent \textit{AgentY} may not. Similarly, some agents might record a \texttt{traceLogHash} for behavioural monitoring, while others omit it. Even when an attribute is applicable, an AI agent may withhold its value for privacy, security, or policy reasons---for instance, an agent might refuse to disclose its \texttt{apiKey} or internal \texttt{configHash}. That is why we model the valuation function ($atEntToVal_d$) as a partial map as defined in Equation \ref{def:valuation}. Hence, This function captures both non-applicability and non-disclosure through partiality.
\begin{mdframed}[hidealllines=true, backgroundcolor=cyan!6,innerleftmargin=1pt,
    innerrightmargin=1pt, innertopmargin=0pt,
    innerbottommargin=1pt,, roundcorner=4pt]
\begin{definition}[Valuation Function]\label{def:valuation}
For any $d\in D$,
\begin{equation}
atEntToVal_d : A_d \times E_d \rightharpoonup AV_d.
\end{equation}
\end{definition}
\end{mdframed}
Undefinedness covers both non-applicability (an AI has no \\ \texttt{passportNo}) and non-disclosure (a user withholds \texttt{email}). This minimal device lets identities be sparse without breaking subsequent definitions. From the following example, \textit{AgentX} might have a value \texttt{modelHash} while the value of \texttt{configHash} for \textit{AgentY} is undefined in an AI domain $d_{AI}$. 
{ \begin{align*}
atEntToVal_{d_{AI}}(\texttt{modelHash}, \textit{AgentX}) &= 0x3f2a9c84b7...\\
atEntToVal_{d_{AI}}(\texttt{configHash}, \textit{AgentY}) &\text{ is undefined}
\end{align*}}


\textbf{Attribute Classification:}
\label{attrib}

A domain needs a unique naming attribute, called \textit{Identifier}, which is used to uniquely identify an entity within that domain. Other attributes may still identify in weaker senses, or may be undefined, however, an identifier must have a value for each entity. An identifier can be defined in the following way:

\begin{mdframed}[hidealllines=true, backgroundcolor=cyan!6, innerleftmargin=1pt,
    innerrightmargin=1pt, innertopmargin=0pt,
    innerbottommargin=1pt, roundcorner=4pt]
\begin{definition}[Identifier]\label{def:id}
An \emph{identifier} with respect to domain $d \in D$ is a function
\[
\mathit{identifier} : D \rightarrow A^C
\]
such that for any $d \in D$, if $i = \mathit{identifier}(d)$, then:
\begin{enumerate}[label=\roman*.]
    \item $\mathit{atEntToVal}_d(i, x)$ is defined for all $x \in E_d^{AI}$,
    \item $\mathit{atEntToVal}_d(i, x_1) \neq \mathit{atEntToVal}_d(i, x_2)$ for all distinct $x_1, x_2 \in E_d^{AI}$.
\end{enumerate}
\end{definition}
\end{mdframed}
Within a web application domain, an identifier might be an email address which is dependent on a centralized email provider. However, in decentralized systems, we prefer cryptographically-generated identifiers such as DIDs, which guarantee global uniqueness without relying on any centralized provider. We can denote such an identifier ($i$) as $\mathrm{DID}\in \mathbf{A}^C$. For example, for a DeFi domain (denoted as $d_{DeFi}$) for \texttt{AgentX}, the identifier function could reveal something like this: $identifier(d_{DeFi})) = \texttt{DID}^{AgentX}$. It is to be noted that two distinct entities cannot share the same identifier value in the same domain. For example, AgentX's $\texttt{DID}^X$ = ``did:ethr:AgentX" differs from AgentY's $\texttt{DID}^{AgentY}$ = ``did:ethr:0x742d...", providing unambiguous reference to each entity.




\begin{mdframed}[hidealllines=true,backgroundcolor=cyan!6, innerleftmargin=1pt,
    innerrightmargin=1pt, innertopmargin=0pt,
    innerbottommargin=1pt, roundcorner=4pt]
\begin{definition}[Identity Resolution]\label{def:idVal}

As identifier $i = \mathrm{identifier}(d)$ is unique for every entity in the domain (Definition \ref{def:id}), we define the identity resolution function $idValToEnt: AV_d \to E_d \cup \{\bot\}$ as the inverse of the valuation function. For any value $v \in AV_d$:
$$idValToEnt(v) = x \iff atEntToVal_d(i, x) = v$$
If no such entity exists, it returns $\bot$.

\end{definition}
\end{mdframed}
Some attributes can identify AI agents but without uniqueness guarantees. These attributes act as \emph{partial identifiers}: they assist in differentiating groups of agents, but they do not guarantee uniqueness within the entire domain. The unique identification of each agent instead relies on cryptographic attributes such as the \texttt{DID} in $A^c$. We define a partial identifier in the following way:

\begin{mdframed}[hidealllines=true,backgroundcolor= cyan!6, innerleftmargin=1pt,
    innerrightmargin=1pt, innertopmargin=0pt,
    innerbottommargin=1pt, roundcorner=4pt]
\begin{definition}[Partial Identifiers]\label{def:pid}

\begin{equation}
\begin{aligned}
PI_d=\{a\in A_d\setminus\{i\} \mid \exists v: |\{x\in E_d \mid \\
atEntToVal_d(a,x)=v\}|
\ge 1\} 
\end{aligned}
\end{equation}

\end{definition}
\end{mdframed}
By definition, $PI_d \subseteq A_d \setminus \{ identifier(d) \}$ contains attributes whose values may identify one or more AI entities, but without uniqueness guarantees.

In the AI agent domain, several attributes naturally fall into this category. Let us assume that several agents may share the same \texttt{modelHash} if they are instantiated from the identical weights and architecture. For example, in domain $d_{AI}$, two agents, $AgentX$ and $AgentY$, are instantiated from the same release and therefore both share the same hash of model weights. Formally,
{\begin{align*}
atEntToVal_{d}(\texttt{modelHash}, AgentX) &= 0x9af4 \\
atEntToVal_{d}(\texttt{modelHash}, AgentY) &= 0x9af4 
\end{align*}}

This is an indication that although \texttt{modelHash} can cluster agents with the same release, it cannot alone differentiate between individual instances, and is therefore a partial and not a unique identifier. In a similar manner, two agents might share the same \texttt{versionTag} if they originate from the same release. 

\vspace{0.1 em}
\textbf{Identity Constructs:}
We now focus on the development of an agent's identity within a domain. This process occurs in three phases: \textit{partial identity}, \textit{profile}, and ultimately \textit{total identity}, where each phase builds upon the previous one. Subsequently, We formalise what an identity is for an AI Agent within a domain, how profiles are constructed from its identity, and how identities across domains accumulate into a global view without losing provenance.
\begin{mdframed}[hidealllines=true,backgroundcolor= cyan!6, innerleftmargin=1pt,
    innerrightmargin=1pt, innertopmargin=0pt,
    innerbottommargin=1pt, roundcorner=4pt]
\begin{definition}[Partial Identity]\label{def:partial}

The \textbf{partial identity} of an agent ${AgentX} \in E_d^{AI}$ in domain $d$ is the set of attributes defined for that agent in that domain:
\begin{equation}
\begin{aligned}
  parIdent_d^{AgentX} = \{ (a, v) \mid a \in A_d, \\atEntToVal_d(a,{AgentX}) = v \}.   
\end{aligned} 
\end{equation}

\end{definition}

That is, the partial identity is the set of attribute--value pairs $(a,v)$ such that attribute $a$ from the domain $d$ attribute set $A_d$ is defined for the AI entity $x$, and the valuation function $atEntToVal_d$  returns a value $v$. 
\end{mdframed}
Consider our previous agent \emph{AgentX} functioning within the DeFi domain, \(d_{\text{DeFi}}\). In this context, important attributes consist of its decentralized identifier ($\texttt{DID}$), the hash of its model weights (\texttt{modelHash}), latency at the 95th percentile (\texttt{latency95}), and the decentralized identifier of its owner (\texttt{ownerDID}). The valuation function is defined for these attributes and yields:
\begin{align*}
parIdent^{\text{{AgentX}}}_{d_{\text{DeFi}}} = \{ \,
   (\texttt{DID}, \texttt{did:ethr:0x742d...}),\\ 
   (\texttt{modelHash}, 
   \texttt{0x3f2..}), (\texttt{latency95}, 220\text{ms}),\\ (\texttt{ownerDID}, \texttt{did:ethr:Alice}) \,\}.
\end{align*}

This set represents the \textbf{Partial Identity} of \emph{AgentX} within \(d_{\text{DeFi}}\). It contains only those attributes which the valuation function successfully defines for the agent in that domain.

While the concept of \textbf{Partial Identity} provides a comprehensive picture of everything the domain knows about a specific entity $x$ in a certain domain, in practice agents rarely disclose their entire partial identity. Instead, they reveal a carefully selected subset of their attributes depending on the verification context. This leads to the notion of a \textbf{Profile}: a curated subset of the partial identity tailored for a specific interaction. This distinction is essential. Whereas, partial identity includes all the attributes specified by the domain that are available to the entity, a profile is what the entity (or its rules) chooses to reveal in a particular context. This selective disclosure is particularly important in the case of AI agents in many situations such as regulatory audit, privacy-preserving checks, or delegating a contract, where not all information is needed or even can be dangerous. Consequently, every profile invariably includes the DID and may incorporate only the essential set of additional attributes required for a specific service or verification context. If $A$ represents the specific attribute names the agent wishes to disclose, you must construct the attribute-value pairs dynamically in the union:
\begin{mdframed}[hidealllines=true,backgroundcolor= cyan!6, innerleftmargin=1pt,
    innerrightmargin=1pt, innertopmargin=0pt,
    innerbottommargin=1pt, roundcorner=4pt]
\begin{definition}[Profile for AI Agents]\label{def:profile}
For an agent ${AgentX} \in E_d^{AI}$, the profile of ${AgentX}$ in domain $d$ is defined as:
\begin{equation}
\begin{aligned}
PROFILE_{d}^{AgentX} = \{(identifier(d), v)\} \cup \\\{(a, atEntToVal_d(a, AgentX)) \mid a \in A\}    
\end{aligned}
\end{equation}

\end{definition}
\end{mdframed}
Here:
\begin{itemize}
    \item $identifier(d) = DID$ is the unique identifier of the agent in the AI domain d,
    \item $v = atEntToVal_{d}(DID, {AgentX})$ is the specific identifier value of that agent,
    \item $A \subseteq PI_d$ is the chosen subset of partial identifiers the agent wishes to disclose, with the condition that $atEntToVal_d(a,\\ AgentX)$ is defined for all $a \in A$.

\end{itemize}
Following the previous example for \emph{AgentX}, let us assume that \emph{AgentX} interacts with a \textbf{compliance auditor} who only requires proof of the agent's existence and its linkage to its owner, but not its detailed performance metrics or architecture. In this scenario, the profile presented by the agent is:
{ \begin{align*}
profile^{\substack{\text{AgentX}}}_{d_{\text{DeFi}}} =
\{(\texttt{DID}, \texttt{did:ethr:0x72d..}), \,\\(\texttt{ownerDID}, \texttt{did:ethr:Alice})\}.
\end{align*}}
From this example, it is clear that a profile is a strict subset of the full partial identity of \emph{AgentX} in the DeFi domain. 
which remain part of the partial identity but are withheld for privacy 
and relevance reasons.

While the concepts of \textbf{partial identity} and \textbf{profile} describe what is known about an AI agent in a single domain, an agent rarely exists in isolation. Modern AI agents frequently operate across multiple domains simultaneously, for instance, an agent may function in a DeFi domain, while also being deployed in a supply chain ecosystem. In such cases, each domain maintains its own partial identity and associated profiles, but collectively, these must accumulate into a consistent global picture of the agent. This motivates the definition of \textbf{total identity}, a complete collection of all partial identities across all domains in which the agent participates. The total identity is defined as follows.
\begin{mdframed}[hidealllines=true,backgroundcolor= cyan!6, innerleftmargin=1pt,
    innerrightmargin=1pt, innertopmargin=0pt,
    innerbottommargin=1pt, roundcorner=4pt, ]
\begin{definition}[Total Identity]\label{def:total}
For an agent $AgentX \in E_{d}^{AI}$, the total identity is defined as:
\begin{equation}
\begin{aligned}
\mathsf{identity}^{AgentX} = \bigcup \{ (d, parIdent_d^{AgentX}) \\\mid d \in D, \; AgentX \in E_d^{AI} \} 
\end{aligned}
\end{equation}
\end{definition}
\end{mdframed}
Here:
\begin{itemize}
    \item $\mathsf{identity}^{AgentX}$ denotes the global identity of agent ${AgentX}$,
    \item $D$ is the set of all domains in which identities are managed,
    \item each component $(d, parIdent_d^{AgentX})$ pairs a domain tag $d$ with the corresponding partial identity of ${AgentX}$ in that domain.
\end{itemize}
By formally unifying these fragments, the total identity preserves contexts while enabling reasoning about the agent across its entire operational scope.
Let us return to the DeFi domain $d_{\text{DeFi}}$. 
Continuing with \emph{AgentX}, we recall its partial identity in the \textbf{DeFi domain} $parIdent^{\text{{AgentX}}}_{d_{\text{DeFi}}}$.

Suppose the same \emph{AgentX} is also deployed in a 
\textbf{supply chain domain} $d_{\text{Supply}}$, where it monitors shipments and handles inventory transactions. Then its partial identity might be:
{ \begin{align*}
parIdent^{\text{AgentX}}_{d_{\text{supply}}} = 
(\texttt{DID}, \texttt{did:ethr:0x742d...}),\\
(\texttt{maxShipmentValue}, \$50{,}000), \\
(\texttt{auditLogHash}, \texttt{0x8e12...}).
\end{align*} }

The \textbf{total identity} of \emph{AgentX} is then obtained by uniting its partial identities across both domains:

\begin{align*}
\mathsf{identity}^{\text{AgentX}} =
\{(d_{\text{DeFi}}, parIdent_{d_{\text{DeFi}}}^{AgentX}), \, \\(d_{\text{Supply}}, parIdent_{d_{\text{Supply}}}^{AgentX})\}
\end{align*}
This aggregated structure preserves the provenance of each fact 
(DeFi vs. supply chain), avoids attribute collisions, and provides 
a \textbf{global but domain-aware view} of the agent. 

We must explicitly define how profiles function as subsets of the total identity. Profiles derive from their respective partial identities, and together, the partial identities contribute to the agent’s \textbf{total identity}. Thus, profiles are the \emph{contextual disclosures} of identity, while total identity is the \emph{mathematical union} that secures coherence across all domains.

\vspace{0.1em}

\textbf{Cryptographic Binding:}
Cryptographic checks bind claimed identifiers to effective control. The following function abstracts signature verification. 
\begin{mdframed}[hidealllines=true,backgroundcolor= cyan!6, innerleftmargin=1pt,
    innerrightmargin=1pt, innertopmargin=0pt,
    innerbottommargin=1pt, roundcorner=4pt]
\begin{definition}[Cryptographic Proof]\label{def:proof}
For an agent ${AgentX} \in E_d^{AI}$, the cryptographic proof of ${AgentX}$ in domain $d$ is defined as:
\begin{equation}
\begin{aligned}
checkProof_{d}^{AgentX} : ((\mathit{DID}_d^{AgentX}, \\v_i),\sigma) \mapsto \{\mathsf{true}, \mathsf{false}\}. 
\end{aligned}
\end{equation}
\end{definition}

\end{mdframed}
In decentralized systems, claims must be verifiable without trusting the claimant. When an AI trading agent claims to execute a transaction, it must provide a cryptographic signature proving control over its DID-bound private key. This signature can be verified by any party without requiring online interaction with the agent or its owner. For example, If AgentX claims to have DID ``did:ethr:0x742d...", anyone must be able to verify that AgentX controls the private key corresponding to that DID. This prevents impersonation: merely knowing a DID is insufficient—you must prove control. A verifier accepts when the signature $\sigma$ demonstrates control of the private key bound to the AI agent’s $\mathit{DID}$ value $v_i$. This ensures that identity possession requires cryptographic proof of control rather than mere knowledge of the identifier. For example, when AgentX wants to prove its identity, it signs a challenge/message $c$ as chosen by the verifier with its private key ($K^{-1}_{\text{X}}$): $\sigma = \text{Sign}(K^{-1}_{\text{AgentX}}, c)$ Then $checkProof^{\mathrm{AgentX}}_{d}((\texttt{DID},\texttt{did:ethr:0x742d...}),\sigma)=\mathsf{true}$ if and only if the signature verifies under the public key ($K_{\text{X}}$) registered in AgentX's DID document. A successful verification proves that AgentX possesses the private key bound to the DID.

\section{The ABCD Framework}
\label{section:6}
The ABCD framework organizes the identity of AI agents in four orthogonal dimensions, each can be verified independently, and these dimensions mutually reinforce each other. These layers are defined as restrictions or relations over various mathematical primitives, which reflect a multi-faced nature of AI agent identity, where computational, behavioural, cryptographic, and social aspects all need to be checked for complete trust establishment. We present each of these layers in the following.

\subsection{Layer C: Cryptographic Identity}
\label{layer:C}
Before algorithmic or behavioural attributes matter, we need a cryptographically secure anchor. Layer C provides unforgeable digital identity through public-key cryptography and decentralized identifiers. The cryptographic layer serves as the cornerstone of AI agent identity, establishing the fundamental mechanisms for authentication, non-repudiation, and secure communication. In contrast to human identity systems, where cryptographic components may be secondary, for AI agents, cryptographic identity acts as the main trust anchor.
\begin{mdframed}[hidealllines=true,backgroundcolor= cyan!6, innerleftmargin=1pt,
    innerrightmargin=1pt, innertopmargin=0pt,
    innerbottommargin=1pt, roundcorner=4pt]
\begin{definition}[Layer C: Cryptographic Identity]\label{def:C}
The ${\mathit{DID}}$ attribute ${\mathit{DID}}\in\mathbf{A}_C$ is the domain identifier. For verification it uses the $checkProof_{d}$ function.
\end{definition}
\end{mdframed}
This layer prevents impersonation attacks: without the private key, an attacker cannot forge signatures that pass $checkProof_{d}$. For AgentX with DID ``did:ethr:0x742d...", any action signed by AgentX's private key creates an incorruptible proof. The cryptographic layer includes not only the DIDs but also associated public keys, key fingerprints for compact identification, and the verification procedures that bind cryptographic proofs to identity assertions, allowing them to engage in activities requiring non-repudiation proof. 

\subsection{Layer A: Algorithmic Fingerprint}
\label{layer:A}

Once cryptographic identity is established, Layer A defines a projection function $\Phi_{Alg}$ that extracts computational characteristics such as model, version, and configuration. This function isolates the immutable computational properties (e.g., \texttt{modelHash}, \texttt{versionTag}) required for provenance verification. Function $\Phi_{Alg}$ is defined as follows. 

\begin{mdframed}[hidealllines=true,backgroundcolor= cyan!6, innerleftmargin=1pt,
    innerrightmargin=1pt, innertopmargin=0pt,
    innerbottommargin=1pt, roundcorner=4pt]
\begin{definition}[Algorithmic Fingerprint Function]
\label{def:A}
We define the algorithmic fingerprint function $\Phi_{Alg}: E^{AI} \times D \to \mathcal{P}(A \times V)$ as a projection of the entity's state of an agent $AgentX$:

\begin{equation}
\begin{aligned}
    \Phi_{Alg}(AgentX, d) = \{ (a, v) \mid a \in A^A \\ \wedge atEntToVal_d(a, AgentX) = v \}   
\end{aligned}
\end{equation}

\end{definition}
\end{mdframed}

 The algorithmic fingerprint records the computational nature of an AI agent that is static, i.e. the model architecture, provenance of training, and configuration parameters. Typical components are elements of $A^A$ and are static in nature. They change rarely and serve as provenance anchors, enabling long-term trust relationships based on verified computational properties.
For AgentX:
\begin{equation*}
\Phi_{Alg}(AgentX)  =
\left\{
\begin{aligned}
&(\texttt{modelHash},\; \texttt{0x3f2..}),\\
&(\texttt{architecture},\; \\&\texttt{Transformer-6B}),\\
&(\texttt{versionTag},\; \texttt{v2.1})
\end{aligned}
\right\}
\end{equation*}

If regulators require that only models approved by version \texttt{v2.1} operate in a domain DeFi, they can verify 
$(\texttt{versionTag}, \texttt{v2.1}) \in \Phi_{Alg}(AgentX). $
If AgentX were secretly running \texttt{v3.0-beta}, the version mismatch would reveal the violation. We note that impersonation resistance (\Cref{thm:impersonation}) does not depend on Layer~A. Moreover, in \Cref{sec:security-theorems}, Assumptions~2, 3, and~5 concern only key possession and identifier resolution, and \Cref{def:pid} already anticipates that \texttt{modelHash} may be shared, at negligible cost, across many agents from the same release. A cloned model therefore does not, by itself, grant a valid DID or private key for a different agent.

\subsection{Layer B: Behavioural Fingerprint}
\label{layer:B}

Static provenance does not guarantee correct operations. Layer B captures empirical evidence: how the agent actually behaves over time. It captures the dynamic aspects of the operation of AI agents. The layer is important since AI agents, especially those that are machine learning-based, can also have emergent behaviour that cannot be entirely predictable based on static specifications.
\begin{mdframed}[hidealllines=true,backgroundcolor= cyan!6, innerleftmargin=1pt,
    innerrightmargin=1pt, innertopmargin=0pt,
    innerbottommargin=1pt, roundcorner=4pt]
\begin{definition}[Behavioural Fingerprint Function]\label{def:B}
Let $\Phi_{Beh}$ be a function that returns the dynamic performance metrics of agent $AgentX$ over time window $W$:
\begin{equation}
\begin{aligned}
\Phi_{Beh}(AgentX, W) = \{ (a, v) \mid a \in A^B, \ v = \\atEntToVal_d(a, AgentX) \}     
\end{aligned}
\end{equation}
\end{definition}
\end{mdframed}
These capture how the agent behaves under load or policy (e.g., refusal rate, latency). As an example, for AgentX over the past 30 days ($W=30d$), $\Phi_{Beh}$ could return something like the following:
{ \begin{align*}\Phi_{Beh}(AgentX, 30d)=\{(\texttt{latency95},220\text{ms}),\\ (\texttt{refusalRate},0.03),  
(\texttt{avgTradeSize},\$5000)\} \end{align*}}
In addition, let's say AgentX's documented $\Phi_{Alg}(AgentX)$ claims the model responds in under 200ms. But $\Phi_{Beh}(AgentX, 30d)$ shows 220ms latency. This shows incoherence flags because of potential issues such as model drift, resource contention, or malicious modification. As formalized in Corollary~2 (Section \ref{subsec:theorems}) and illustrated in \ref{subsec:examples}, the coverage of $\pi$ scales with $|A_B|$ and with how tightly each $\tau_a$ is calibrated (e.g. via a $k\sigma$ control limit rather than a fixed constant). Latency and refusal-rate are illustrative instances, not an
exhaustive set. 

\subsection{Layer D: Delegation and Ownership}
\label{layer:D}

Finally, the delegation and ownership layer addresses the fundamental question of accountability of the AI agent systems. The questions discussed in Layer D are questions concerning responsibility for the agent's actions and the authority under which it acts. Such a level is especially essential with the decentralized systems in which the hierarchies of the traditional organization may not apply.
\begin{mdframed}[hidealllines=true,backgroundcolor= cyan!6, innerleftmargin=1pt,
    innerrightmargin=1pt, innertopmargin=0pt,
    innerbottommargin=1pt, roundcorner=4pt]
\begin{definition}[Delegation and Ownership]\label{def:D}
Let $\mathrm{DID}^{u}$ denote the DID of a human (user, $u \in E_{d}^{u}$), 
$Perm$ the set of permissions, and $T$ the set of time intervals. 
Define:
\begin{equation}
\label{eq:23}
\begin{aligned}
\mathsf{Own}\subseteq {DID}^{u}_d\times E_{d}^{AI},\quad\\
\mathsf{Del}\subseteq {DID}^{u}_d \times E_{d}^{AI}\times Perm\times T\times T.       
\end{aligned}
\end{equation}
\end{definition}
\end{mdframed}

Ownership binds an agent to the human who is responsible for it, whereas delegation gives the agent the right to act on a temporary basis $p\in Perm$. In order to implement \ref{ST:6} (Elevation of Privilege) prevention, we practically implement these tuples based on the W3C Verifiable Credential (VC) Data Model \cite{w3c2022vc}. Delegation is represented as a Delegation VC (acting as an unforgeable, agent-specific authorization token) with the DID of the human owner ($DID^u$) serving as the Issuer, and the DID of the agent ($DID^{AgentX}$) rigidly bound as the \texttt{credentialSubject}. 

Let us assume for AgentX:$(\texttt{did:ethr:Alice}, \text{AgentX})\in \mathsf{Own}$. This establishes Alice as the accountable owner. Alice might delegate, allowing AgentX to trade up to \$50,000 between times $t_1$ and $t_2$:
\begin{equation*}
\begin{aligned}
    (\texttt{did:ethr:Alice}, \text{AgentX},\\ \texttt{trade\_upto\_50k}, t_1, t_2)\in \mathsf{Del}
\end{aligned}    
\end{equation*}
After $t_2$, the delegation expires naturally. Nevertheless, in order to consider the case where early intervention is needed (e.g., the agent behaves maliciously prior to $t_2$), the framework supports explicit early revocation as described in Theorem \ref{thm:authorization}.

 
\subsection{Quadrant Relationships Between the ABCD Layers}

Next, we argue that these layers are not merely a list of attributes but are structurally dependent on one another to form a complete identity. This is illustrated in \Cref{fig:quad}.
\begin{figure}[h]
  \centering
  \includegraphics[width=\linewidth]{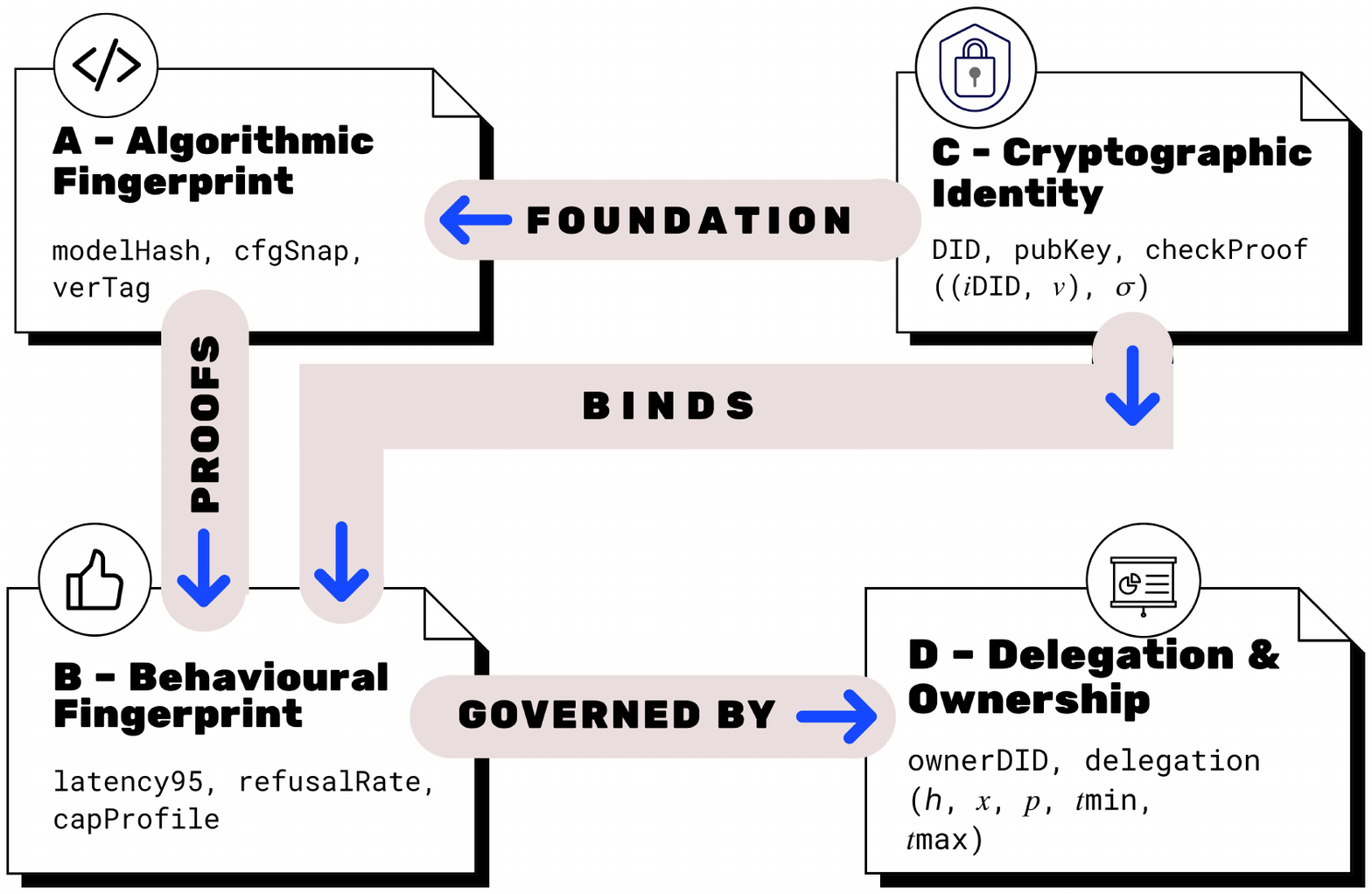}
  \caption{Quadrant view of ABCD: Cryptographic foundation (C) binds algorithmic provenance (A) and 
  behavioural evidence (B) under governance (D).}
  \label{fig:quad}
\end{figure}

This quadrant relationship is elaborated in the following.
\begin{itemize}
    \item \textbf{Foundation (C $\rightarrow$ A):} The \textbf{Cryptographic Identity (C)} serves as the immutable anchor. It binds the abstract digital identifier (DID) to the concrete \textbf{Algorithmic Fingerprint (A)}. Without C, the code in A has no verifiable owner; without A, the keys in C have no defined provenance. Concretely, key generation in Layer~C (\Cref{def:C}) is cryptographically independent of Layer~A. A DID/keypair can be minted with no algorithmic fingerprint at all. What Layer~A's absence forfeits is not key validity but \emph{provenance}. An unbound key can authenticate an entity but cannot prove which model that entity runs. Layer~A's values are, in turn, intentionally not required to be unique across agents (\Cref{def:pid}).
    \item \textbf{Binds (C $\rightarrow$ B):} The cryptographic layer also Binds the dynamic \textbf{behavioural Fingerprint (B)} to the entity. By signing runtime performance metrics and interaction logs with the private keys defined in C, the agent creates a tamper-evident link between its ephemeral actions and its identity.
    \item \textbf{Proof Generation (A $\rightarrow$ B):} The relationship between the static code and dynamic action is one of proof. The\textbf{ Algorithmic Fingerprint (A)} establishes the baseline specifications (e.g., model architecture, latency targets) that serve as Proofs against which the runtime behavioural Fingerprint (B) is measured. 
    \item \textbf{Governed By (B $\rightarrow$ D): }Finally, the agent's observable behaviour is governed By the \textbf{Delegation \& Ownership (D)} layer. The specific actions recorded in B (e.g., executing a trade) are checked against the permissions and constraints defined in D to ensure they remain within the scope of human authorization.
    
\end{itemize}

\subsection{Operational Architecture}
\label{Op arch}
Having defined the theoretical layers of the ABCD framework, we now envision the \textbf{Operational Architecture} required to implement them. As illustrated in \Cref{fig:system}, the system is designed to support the full lifecycle of decentralized AI identity through three primary tiers: \textbf{Entities, Core Management Modules}, and \textbf{Decentralized Infrastructure}.

\begin{figure}[h]
  \centering
  \includegraphics[width=1.0\linewidth]{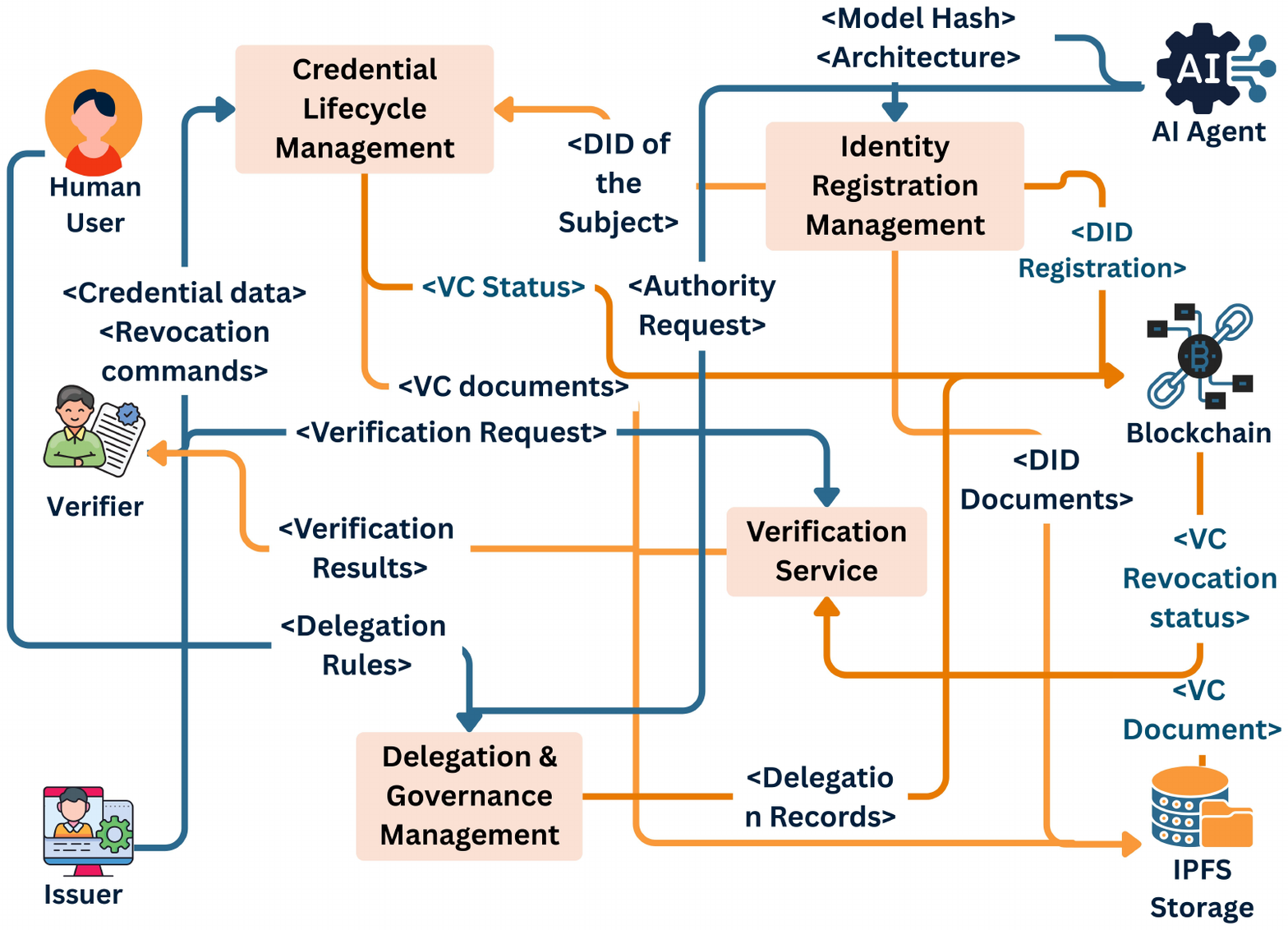}
  \caption{Operational architecture of the identity framework}
  \label{fig:system}
\end{figure}
\subsubsection{Participating Entities}
The architecture supports four different actors that are defined in our domain model:
\begin{itemize}
    \item \textbf{AI Agents:} These are autonomous entities which need to be registered and credentialed to engage in economic activity.
    \item \textbf{Human Users:} These are principals who possess AI agents and delegate authority to them. They communicate mostly with help of the Delegation and Governance Management module in an attempt to articulate scope and liability.
    \item \textbf{Issuers:} Trusted third parties (e.g., auditors, model vendors) which assert agent properties (e.g. model integrity or capabilities) through the Credential Lifecycle Management module.
    \item \textbf{Verifiers:}  External applications (e.g., DeFi protocols) typically that query the Verification Service to verify the credentials of agents before granting them access.
\end{itemize}

\subsubsection{Core Management Modules}
\label{subsub:coremanage}
The system logic is represented by four components:
\begin{itemize}
    \item \textbf{Identity Registration Management (IRM):} The module manages the onboarding AI agents. It also acts as the gateway for both \textbf{Layer A} and \textbf{Layer C} where the raw model hash and architecture data are processed before the DID of the agent anchored to the blockchain.
    \item \textbf{Credential Lifecycle Management:} The issuance, update, and revocation of the VCs are done by this module. It guarantees that dynamic attributes (Layer B) and static attestations are signed cryptographically and stored correctly. 
    \item \textbf{Delegation \& Governance Management:} This module operationalizes Layer D, enabling human owners to establish time-bounded, permission scoped rules of delegation. It links the DID of the human to the DID of the agent and gives a definite line of liability.
    \item \textbf{Verification Service:} The module is the gatekeeper to the system. It combines the blockchain data with the decentralized data storage (i.e., IPFS) data to execute the multi-layered ABCD checks (coherence, impersonation, authorization) in real-time to the verifiers.
\end{itemize}

\subsubsection*{Decentralized Infrastructure}
To guarantee sovereignty and privacy, the architecture employs a dual-storage strategy:
\begin{itemize}
    \item \textbf{Blockchain:} Serves as the verifiable data registry (VDR), which offers an unalterable root of trust. It holds DIDs, public keys and cryptographic hashes (e.g. Model Hash, Policy Hash), however, never holds sensitive personal information or raw model weights.
    \item \textbf{Decentralized Storage: }Used for off-chain storage of heavy or sensitive data, such as full DID Documents, and detailed Delegation Records. Data here is content-addressed and encrypted, referenced on-chain only by its respective identifier.
\end{itemize}

\subsection{ Roadmap to Practical Implementation}
Although the ABCD framework proposes a new identity calculus of AI entities, it does not involve the replacement of current internet protocols. Instead, it is intended to be deployed as an application layer on top of Self-Sovereign Identity (SSI) stacks and W3C standards that are mature and battle-tested. To illustrate how a system designer can transition out of this conceptualized model to a practical deployment, we describe the technology stack necessary below:
\begin{itemize}
    \item \textbf{AI Identity Sidecar (ACA-Py):} Since most contemporary Agentic AI systems (e.g., LangChain \cite{langchain2022}, AutoGPT \cite{autogpt2023}) are written in Python, developers can implement the Core Management Modules (\Cref{fig:system}) with Hyperledger Aries Cloud Agent Python (ACA-Py) \cite{acapy2024}. ACA-Py is an open-source framework used to build SSI agent. It can expose REST APIs to facilitate SSI functionalities such as generating DIDs (Layer C), accepting Delegation VCs (Layer D) and interacting with other SSI agents to provide algorithmic and behavioral proofs (Layers A and B).

    \item  \textbf{Bridging Human Authorization (OpenID4VC):} The framework uses OpenID to Verifiable Credentials (OpenID4VC) \cite{openid4vc2024} to cope with the operational challenge of human-to-agent delegation. This is a protocol that connects the traditional Web2 identity infrastructure (e.g., OAuth 2.0) to decentralized identity. It enables a human user to authenticate with standard, existing authentication techniques (e.g., biometric authentication on a mobile device) and issue a mathematically constrained Delegation VC (Layer D) to the wallet of their agent, facilitated by ACA-Py. To be precise, about scope, all core verification in Layers~A--D (Definitions~\ref{def:proof}--\ref{def:authorization}) relies solely on standard digital signatures per Definition~\ref{def:proof}. ZKPs, discussed next, are an optional, additive privacy layer, not a required primitive of the base framework.
    
    \item \textbf{Enterprise-Scale Infrastructure and ZKPs:} The underlying cryptographic primitives of this framework (W3C DIDs and VCs) are already being used to deploy large-scale systems. Enterprise-grade infrastructure, like the Microsoft Entra Verified ID \cite{entra2024} or the European Blockchain Services Infrastructure (EBSI) \cite{ebsi2026} are examples of such infrastructure that use these primitives and can be used to anchor Layer C by both systems designers and other infrastructure providers.
    Moreover, in order to counter the privacy-transparency trade-offs discovered in Layer B, developers may incorporate Zero-Knowledge Proof (ZKP) \cite{goldwasser1989zkp} stacks like Privado ID \cite{privado2024}. This enables the AI agent to cryptographically demonstrate compliance with a performance policy ($\pi$) to a DeFi verifier without the DeFi verifier seeing the underlying, proprietary operational data on which the proof was generated.

\end{itemize}
With the coordination of these already existing, open-source technologies, e.g., ACA-Py and OpenID4VC, and widely-deployed infrastructures, the ABCD framework can be implemented instantly
to achieve decentralized agentic economies.

\section{Security Theorems}
\label{sec:security-theorems}
 
To safeguard real-world use with reliable behaviour, we need policies that bind static definitions (algorithmic fingerprints, ownership, delegation) with dynamic checks (behavioural evidence, time windows, revocations). We first fix the coherence, impersonation, and authorization predicates as formal characterizations (\ref{subsec:characterizations}), then make explicit the cryptographic and systems-level assumptions our guarantees rely on (\ref{subsec:assumptions}), then state three theorems along with 2 corollaries capturing the corresponding security guarantees (\ref{subsec:theorems}).

\subsection{Formal Characterizations}
\label{subsec:characterizations}
 
We record three predicates, i.e, coherence, impersonation, and authorization, continuing the Definition~1--14 sequence of
\ref{section:5}- \ref{section:6} . These are definitional: they fix what each predicate \emph{means}. The theorems and corollaries (\ref{subsec:theorems}) then derive, under the assumptions of \ref{subsec:assumptions}, the conditions under which each predicate is \emph{guaranteed} to hold or fail. 
That derivation, not the definition itself, is the security claim.

\textit{\textbf{A--B Coherence Failure:}} Formally, $Bound(AgentX,W)$ means  there is coherence at window $W$ for AgentX and its negation ($\neg Bound(AgentX,W)$) states that a failure has been identified.

\begin{mdframed}[hidealllines=true,backgroundcolor= cyan!6, innerleftmargin=1pt,
    innerrightmargin=1pt, innertopmargin=0pt,
    innerbottommargin=1pt, roundcorner=4pt]
\begin{definition}[A--B Coherence Failure]\label{def:coherence}
\begin{equation}
\begin{aligned}
\neg Bound(AgentX,W) \iff \exists \pi [ \pi( \Phi_{Alg}\\(AgentX), \Phi_{Beh}(AgentX, W) ) = false ]
\end{aligned}
\end{equation}
\end{definition}
\end{mdframed}
Here, $\pi : P (A^A \times AV_d) \times P (A^B \times AV_d)
\to \{\text{true}, \text{false}\}$ is a coherence policy that measures whether an AI agent's (here suppose, AgentX) observed behavioural attributes $\Phi_{Beh}(AgentX,W)$ are consistent with its stated Algorithmic attributes $\Phi_{Alg}(AgentX,d)$ within a specified observation window, W. If $\pi$ returns $\mathsf{false}$, the agent shows a \textbf{coherence failure}, which implies that runtime behaviour is not compliant with registered model configuration. \\
\textit{\textbf{Impersonation:}} Let us define impersonation:
\begin{mdframed}[hidealllines=true,backgroundcolor= cyan!6, innerleftmargin=1pt,
    innerrightmargin=1pt, innertopmargin=0pt,
    innerbottommargin=1pt, roundcorner=4pt]
\begin{definition}[Impersonation]
\label{def:impersonation}
\begin{equation}
\begin{aligned}
\mathrm{Impersonates}(AgentX,\, v_i,\, \sigma)
:=\;
\bigl(
  \\checkProof^{\mathrm{AI}}_{d}
  \bigl(
    ({\mathit{DID}},\, v_i),\,
 \sigma
  \bigr)
  = \mathsf{true}
\bigr) \\
\wedge\
\bigl(
  idValToEnt^{AI}(v_i) \neq AgentX 
\bigr).
\end{aligned}
\end{equation}
\end{definition}
\end{mdframed}

\textit{\textbf{Authorization:}}
Let $\mathsf{Revoked}$ denote the set of revoked delegations, with
$((u,\mathsf{AgentX},p,t_{start},t_{end}), t_{rev}) \in
\mathsf{Revoked}$ recording that a delegation was revoked at time
$t_{rev}$. Then
\begin{mdframed}[hidealllines=true,backgroundcolor= cyan!6, innerleftmargin=1pt,
    innerrightmargin=1pt, innertopmargin=0pt,
    innerbottommargin=1pt, roundcorner=4pt]

\begin{definition}[Authorization]
\label{def:authorization}
\begin{equation}
\begin{aligned}
\mathrm{Authorized}(u, AgentX, p, t)
:={true} \iff
\bigl((u, AgentX)\\  \in \mathsf{Own}\bigr)
\;\wedge\; 
\exists\, t_{\text{start}}, t_{\text{end}}.\;
(u, AgentX, p, t_{\text{start}}, t_{\text{end}}) \in \mathsf{Del}\\ 
\wedge\;
t_{\text{start}} \le t \le t_{\text{end}}
\;\wedge\; \neg\,\mathrm{Revoked}(u, AgentX, p, t).
\end{aligned}
\end{equation}
\end{definition}
\end{mdframed}

\subsection{Assumptions}
\label{subsec:assumptions}

\begin{assumption}[Collision Resistance]
\label{assum:collision}
The hash function $H$ (SHA-256) used to compute \texttt{modelHash} is collision-resistant: no probabilistic polynomial-time (PPT \cite{PPT}) adversary can find $w \neq w'$ with $H(w) = H(w')$ except with negligible probability.
\end{assumption}
 
\begin{assumption}[Signature Unforgeability]
\label{assum:euf-cma}
The digital signature scheme $(\mathsf{Sign}, \mathsf{Verify})$
underlying $\mathsf{checkProof}$ (\Cref{def:proof}) is Existentially Unforgeable under adaptive Chosen Message Attack (EUF-CMA) \cite{EUF}.
\end{assumption}
 
\begin{assumption}[Identifier Injectivity]
\label{assum:injectivity}
The identifier function $\mathsf{DID} \in A_C$ satisfies the injectivity condition of \Cref{def:id}; equivalently, $\mathsf{idValToEnt}$ is a well-defined, injective partial function.
\end{assumption}
 
\begin{assumption}[Byzantine Fault Tolerant (BFT) \cite{BFT} Determinism]
\label{assum:bft}
The distributed ledger executing the smart-contract wallet logic maintains BFT consensus\cite{castro1999practical} with at most $f$ faulty nodes out of $n \geq 3f+1$, guaranteeing that all honest nodes deterministically agree on a single canonical execution
history for every state transition.
\end{assumption}
\begin{assumption}[Self-Certifying Identity Resolution]
\label{assum:resolution-integrity} 
     \hspace{1.5cm}
     \\
     \emph{Self-certification}: Each entity's identifier value $v \in AV_{A_C}$ is deterministically derived from the entity's own public key at generation time (e.g. $v = f(K)$ for a collision-resistant $f$), consistent with the DID construction; consequently no adversary can pre-register or front-run another entity's future identifier value except by breaking Assumption~\ref{assum:collision}. 
    
    \emph{Finality-gated reads}: Any evaluation of $\mathsf{idValToEnt}(v)$ relied upon by a verifier is performed against the BFT-finalized state of the VDR (Assumption~\ref{assum:bft}) at a confirmation depth beyond the network's reorganization bound, never against a mutable local cache, an unconfirmed mempool entry, or an unverified resolver response.

\end{assumption}
 
\begin{assumption}[Finality Semantics of Authorization Time]
\label{assum:finality-time}
The time variable $t$ used in $\mathsf{Authorized}(u,\mathsf{AgentX},p,t)$ and $\mathsf{Revoked}(u,\mathsf{AgentX},p,t)$ denotes the block-finalization time of the respective transaction on the canonical chain (per Assumption~\ref{assum:bft}), not its network submission time or mempool arrival time. Ordering of two transactions that are both merely \emph{pending} (submitted but not yet finalized) is not determined by this model.
\end{assumption}
 
Assumptions~\ref{assum:collision}--\ref{assum:injectivity} are
standard cryptographic hardness assumptions; Assumption~\ref{assum:bft} is the standard safety threshold for BFT ledgers (e.g., PBFT \cite{castro1999practical}) and is consistent with our system model (\ref{Op arch}). Assumption~\ref{assum:resolution-integrity} makes explicit two properties the paper already relies on informally (self-certifying DIDs, on-chain anchoring) so that Theorem~\ref{thm:impersonation}'s guarantee cannot be defeated by attacking the \emph{read path} to $\mathsf{idValToEnt}$ rather than the cryptography itself. Assumption~\ref{assum:finality-time} closes the gap between submission-time and finalized-time ordering that Theorem~\ref{thm:authorization} otherwise leaves open to reordering attacks (Corollary~\ref{cor:challenge-window}). A further, model-specific hypothesis is stated locally within Theorem~\ref{thm:coherence}, since it concerns the \emph{statistical} relationship between tampering and observed behaviour rather than a cryptographic hardness property.
 
\subsection{Theorems and Proofs}
\label{subsec:theorems}
 
\begin{theorem}[Behavioural Coherence and Tampering Detection]
\label{thm:coherence}
Let $w^\ast$ be AgentX's registered weights, with $\mathsf{modelHash} = H(w^\ast)$ committed on-chain (Definition~12). Suppose an adversary $\mathsf{Adv}$ replaces AgentX's active weights with $w' \neq w^\ast$ at runtime, and suppose further that this
substitution is \emph{behaviourally sensitive}: there exists a threshold $\tau$ and an attribute $a \in A_B$ such that $|\Phi_{Beh}(\mathsf{AgentX}, W)[a] - \mathrm{baseline}[a]| > \tau$ over the observation window $W$. Then the coherence policy $\pi$ detects the tampering: $\neg\mathsf{Bound}(\mathsf{AgentX}, W)$. 
\end{theorem}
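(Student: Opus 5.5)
The plan is to reduce the theorem to \Cref{def:coherence}, which is an existential statement. To conclude $\neg\mathsf{Bound}(\mathsf{AgentX},W)$ it suffices to exhibit a single coherence policy $\pi$ with $\pi(\Phi_{Alg}(\mathsf{AgentX}),\Phi_{Beh}(\mathsf{AgentX},W))=\mathsf{false}$. I will therefore construct $\pi$ explicitly from the data in the hypothesis rather than argue about arbitrary policies.

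First, fix the baseline. By \Cref{def:A}, $\Phi_{Alg}(\mathsf{AgentX})$ contains $(\mathsf{modelHash},H(w^\ast))$ together with the registered specification attributes. I will treat $\mathrm{baseline}[a]$, for each $a\in A_B$, as a deterministic function of $\Phi_{Alg}(\mathsf{AgentX})$, namely the expected behaviour of the model committed under $\mathsf{modelHash}$.

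Second, define the threshold policy
\[
\pi_\tau(S_A,S_B)=\mathsf{false}\iff \exists a\in A_B.\ |S_B[a]-\mathrm{baseline}_{S_A}[a]|>\tau .
\]
Here $S_B[a]$ is the value $v$ with $(a,v)\in S_B$ (partiality as in \Cref{def:valuation}), and $\pi_\tau$ returns $\mathsf{true}$ otherwise. This is a well-typed map $\mathcal{P}(A^A\times AV_d)\times\mathcal{P}(A^B\times AV_d)\to\{\mathsf{true},\mathsf{false}\}$, as \Cref{def:coherence} requires.

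Third, instantiate. Behavioural sensitivity supplies $a$ and $\tau$ with $|\Phi_{Beh}(\mathsf{AgentX},W)[a]-\mathrm{baseline}[a]|>\tau$, so $\pi_\tau$ returns $\mathsf{false}$, and \Cref{def:coherence} yields $\neg\mathsf{Bound}(\mathsf{AgentX},W)$.

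The main obstacle is making sure the baseline actually refers to $w^\ast$ and not to $w'$. Otherwise an adversary could re-register $H(w')$ and shift the baseline so that no deviation is observed. I would close this gap with Assumption~\ref{assum:collision}. Since $w'\neq w^\ast$, the adversary cannot produce a commitment equal to $H(w^\ast)$ for $w'$ except with negligible probability. Any attempt to rebind the on-chain $\mathsf{modelHash}$ would also require a transaction signed under AgentX's key, which Assumption~\ref{assum:euf-cma} prevents, and it would be observed on the finalized VDR state by Assumptions~\ref{assum:bft} and \ref{assum:resolution-integrity}. Hence the baseline stays anchored to $w^\ast$, except with negligible probability.

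I would also note that the behavioural values must be bound to AgentX, as in the C$\rightarrow$B relationship, so that the observed $\Phi_{Beh}$ cannot be forged. The conclusion then holds except with negligible probability. Finally, I would remark that the statistical content lives entirely in the sensitivity hypothesis: the theorem makes no claim about substitutions that stay within the threshold, which is where the corollary on $|A_B|$ and the calibration of $\tau_a$ comes in.
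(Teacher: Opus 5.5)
Your proposal is correct and follows the same route as the paper's proof: reduce to the existential in \Cref{def:coherence}, exhibit the threshold policy, and discharge it with the behavioural-sensitivity hypothesis; the only difference is that you write $\pi_\tau$ out explicitly, where the paper just invokes ``the threshold coherence policy of Layer B''. Your baseline-anchoring paragraph, like the paper's collision-resistance ``static divergence'' step, is never used in the derivation, because the hypothesis already fixes the deviation from $\mathrm{baseline}[a]$. Its EUF-CMA part would also not cover an operator-driven swap, where the adversary holds AgentX's key. So the conclusion holds outright, not merely except with negligible probability.
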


\begin{IEEEproof} 
\textbf{(Static divergence.)} By Assumption~\ref{assum:collision}, $H$ is collision-resistant, so $H(w') \neq H(w^\ast)$ except with negligible probability; the tampered instance's true weights no longer match the committed provenance, even though the on-chain record still reads $\mathsf{modelHash} = H(w^\ast)$.
 
\textbf{(Behavioural divergence.)} Because AgentX's observable behaviour over $W$ is now generated by $w'$ rather than $w^\ast$, the behavioural sensitivity hypothesis gives an attribute $a \in A_B$ with $|\Phi_{Beh}(\mathsf{AgentX},W)[a] - \mathrm{baseline}[a]| > \tau$.
 
\textbf{(Policy evaluation.)} Take $\pi$ to be the threshold coherence policy of \ref{layer:B}, which returns \textsf{false} whenever any declared attribute in $\Phi_{Alg}$ is violated by more than its tolerance in $\Phi_{Beh}$. By the previous step, $\pi(\Phi_{Alg}(\mathsf{AgentX}), \Phi_{Beh}(\mathsf{AgentX},W)) = \mathsf{false}$.
 
\textbf{(Conclusion.)} By the coherence characterization (\Cref{def:coherence}), the existence of one policy $\pi$ with $\pi(\Phi_{Alg}, \Phi_{Beh}) = \mathsf{false}$ is sufficient to conclude $\neg\mathsf{Bound}(\mathsf{AgentX}, W)$. Hence any behaviourally-sensitive tampering event is detected within $W$.
\end{IEEEproof}

\begin{remark}
The behavioural-sensitivity hypothesis is not a cryptographic assumption but an empirical one, and it marks precisely where the guarantee's boundary lies: an adversary whose tampering induces deviation below $\tau$ (a maximally stealthy model swap) falls outside its scope. This matches the limitation already discussed in \ref{sec:stat_verification}; Theorem~\ref{thm:coherence} makes precise \emph{when} detection is guaranteed, rather than claiming detection is absolute. We stress that the theorem is \emph{conditional}, not circular: it assumes a stated, falsifiable structural property of the deployment and derives detection from it, exactly as a signature scheme's unforgeability proof assumes a hardness property (e.g. Assumption~\ref{assum:euf-cma}) and derives a guarantee from it. What is genuinely open is \emph{coverage}, for which real attacks does behavioural sensitivity hold; not the validity of the implication itself. Corollary~\ref{cor:evasion} makes this coverage question precise. This also answers why detection proceeds via $\Phi_{Beh}$ rather than by re-checking $H(w')$ directly. Raw weights are deliberately never re-disclosed post-deployment (\ref{ST:4}). So the on-chain \texttt{modelHash} cannot be re-verified live, only its downstream behavioural signature can.
\end{remark}
 
\begin{corollary}[Adversarial Evasion Characterization]
\label{cor:evasion}
An adversary evades the guarantee of Theorem~\ref{thm:coherence} against a tampering event $w^\ast \to w'$ if and only if, for \emph{every} attribute $a \in A_B$ monitored by $\pi$ and across the full input distribution AgentX encounters during $W$, the induced deviation stays within tolerance:  \\
\begin{equation*}
    |\Phi_{\mathrm{Beh}}(\mathsf{AgentX},W)[a] - \mathrm{baseline}[a]| \leq \tau_a
\end{equation*}

\end{corollary}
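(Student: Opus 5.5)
The plan is to prove the biconditional in the two directions separately, after first fixing how $\pi$ is read. Following the proof of Theorem~\ref{thm:coherence}, I take $\pi$ to be the threshold coherence policy of \ref{layer:B}. Monitoring a finite attribute set $A_B^\pi \subseteq A_B$ with per-attribute tolerances $\tau_a$, it returns $\mathsf{false}$ exactly when some monitored attribute deviates from its baseline by more than $\tau_a$. Formally,
\[
\pi(\Phi_{Alg}(\mathsf{AgentX}), \Phi_{Beh}(\mathsf{AgentX},W)) = \mathsf{true} \iff \forall a \in A_B^\pi:\ |\Phi_{Beh}(\mathsf{AgentX},W)[a] - \mathrm{baseline}[a]| \le \tau_a .
\]
"Evades the guarantee" will mean that this monitoring policy fails to flag the swap, i.e.\ $\pi$ returns $\mathsf{true}$ on the tampered window. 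With that reading fixed, the corollary essentially restates this characterization at the level of the tampering event.

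\emph{(If.)} Assume every monitored attribute satisfies $|\Phi_{Beh}(\mathsf{AgentX},W)[a] - \mathrm{baseline}[a]| \le \tau_a$ over the inputs seen in $W$. Then no monitored attribute exceeds its tolerance, so the threshold policy returns $\mathsf{true}$ and no coherence failure is raised through $\pi$. Equivalently, the behavioural-sensitivity hypothesis of Theorem~\ref{thm:coherence} fails with respect to $\pi$'s tolerances, so the theorem's conclusion is not triggered. The static divergence $H(w') \neq H(w^\ast)$ from Assumption~\ref{assum:collision} does not help here. As the Remark explains, the on-chain $\mathsf{modelHash}$ cannot be re-verified live, because raw weights are never re-disclosed (\ref{ST:4}). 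So $\pi$ sees only $\Phi_{Alg}$, which is unchanged, and $\Phi_{Beh}$, which is within tolerance.

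\emph{(Only if.)} I will argue by contraposition. Suppose some monitored $a$ has deviation greater than $\tau_a$ on the window's observed data. Then the hypothesis of Theorem~\ref{thm:coherence} holds with $\tau := \tau_a$ and this $a$. The theorem then yields $\neg\mathsf{Bound}(\mathsf{AgentX},W)$ via Definition~\ref{def:coherence}, so the adversary is detected and does not evade.

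The main obstacle is the quantifier over "the full input distribution AgentX encounters during $W$". $\Phi_{Beh}(\mathsf{AgentX},W)[a]$ is an aggregate over whatever inputs actually arrived, so I would first make explicit that $\Phi_{Beh}$ is evaluated on the realized inputs of $W$. The condition "for every input in the distribution" then has to be understood as "for the aggregate taken over that realized input stream". Otherwise a deviation visible only on unseen inputs would break the "if" direction. A second subtlety is that Definition~\ref{def:coherence} quantifies existentially over all policies $\pi$, which would make evasion essentially impossible. I would therefore state explicitly that the corollary concerns the fixed deployed policy $\pi$, not the existential closure over policies.
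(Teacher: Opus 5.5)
Your proposal is correct and follows the paper's own argument. The paper likewise reduces the corollary to negating the defining biconditional of the threshold policy ($\pi$ returns $\mathsf{false}$ iff some monitored attribute exceeds $\tau_a$), with Theorem~\ref{thm:coherence} used only for the contrapositive direction. Your insistence on fixing the deployed $\pi$, rather than the existential closure in Definition~\ref{def:coherence}, is a genuine gain in precision. For consistency, though, your ``only if'' direction should then conclude directly from your displayed characterization that this fixed $\pi$ returns $\mathsf{false}$, rather than via $\neg\mathsf{Bound}(\mathsf{AgentX},W)$, which under your own reading is the weaker statement.
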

\begin{IEEEproof}
Immediate from the contrapositive of Theorem~\ref{thm:coherence}: $\pi$ returns \textsf{false} (triggering $\neg\mathsf{Bound}$) iff \emph{some} monitored attribute exceeds its tolerance; hence $\pi$ returns \textsf{true} (evasion succeeds) iff \emph{no} monitored attribute does.
\end{IEEEproof}

\begin{remark}
Corollary~\ref{cor:evasion} reframes the detectability. Two levers shrink that region, both already latent in the framework: 
\begin{itemize}
    \item By increasing $|A_B|$, it enables monitoring more behavioural dimensions (e.g. task/data-type distribution, decision traces, not just latency) turns evasion into a higher-dimensional constraint problem; and 
    \item By tightening $\tau_a$ via a principled statistical decision rule (e.g. a $k\sigma$ or CUSUM/EWMA\cite{page1954continuous} \cite{Roberts01022000} control limit on each attribute's baseline distribution) rather than an arbitrary constant.
\end{itemize}Neither claims universal detection; both are concrete, falsifiable design knobs rather than an unexamined empirical premise.
\end{remark}
 
\begin{theorem}[Sybil and Impersonation Resistance]
\label{thm:impersonation}
Under Assumptions~\ref{assum:euf-cma} and \ref{assum:injectivity}, for any registered agent AgentX and any PPT adversary $\mathsf{Adv} \neq \mathsf{AgentX}$ not possessing AgentX's private key, the probability that $\mathsf{Adv}$ causes the system to accept it as AgentX i.e., produces $(v_i, \sigma)$ with $\mathsf{checkProof}^{AI}_d((\mathsf{DID}, v_i), \sigma) = \mathsf{true}$ and $\mathsf{idValToEnt}^{AI}(v_i) = \mathsf{AgentX}$, is negligible. 
\end{theorem}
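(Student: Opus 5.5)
The plan is a standard reduction to EUF-CMA (\Cref{assum:euf-cma}), with \Cref{assum:injectivity} used to pin down which public key the forgery must verify under. First, I fix the target. Let $v_X = atEntToVal_d(\mathit{DID}, \mathsf{AgentX})$. By \Cref{assum:injectivity}, $idValToEnt$ is a well-defined injective partial function, so the condition $idValToEnt^{AI}(v_i) = \mathsf{AgentX}$ forces $v_i = v_X$. Hence any successful output $(v_i,\sigma)$ is a pair $(v_X,\sigma)$. By \Cref{def:proof}, $checkProof_d((\mathit{DID},v_X),\sigma)=\mathsf{true}$ holds exactly when $\mathsf{Verify}(K_X, c, \sigma)=1$. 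Here $K_X$ is the public key registered in AgentX's DID document, and $c$ is the verifier-chosen challenge. This collapses the event ``accepted as AgentX'' to a single event: a valid signature under $K_X$ on the current challenge.

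Second, I build a forger $\mathcal{B}$ against the signature scheme. $\mathcal{B}$ receives a challenge public key $pk$ and installs it as $K_X$ in AgentX's DID document. It then simulates the rest of the system for $\mathsf{Adv}$ honestly. This includes other agents' keys, which $\mathcal{B}$ generates itself, and AgentX's legitimate past signatures, which $\mathcal{B}$ answers with its signing oracle; this is the adaptive chosen-message part. Finally, $\mathcal{B}$ issues a fresh verifier challenge $c^\ast$. If $\mathsf{Adv}$ outputs $(v_X,\sigma)$ that passes $checkProof$, then $\mathcal{B}$ outputs $(c^\ast,\sigma)$. The simulation is perfect, so $\Pr[\mathcal{B}\text{ wins}] \ge \Pr[\mathsf{Adv}\text{ succeeds}] - \Pr[c^\ast \text{ was queried}]$. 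The subtracted term is negligible when challenges are drawn fresh from a superpolynomial space, and EUF-CMA then makes the whole bound negligible.

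For the Sybil part of the statement, I argue that registering many fresh DIDs does not help. Each such DID resolves through $idValToEnt$ to a different entity, by injectivity. So accepting it is by definition not accepting $\mathsf{Adv}$ as AgentX, and its contribution to the success event is zero. I would add a one-line remark that \Cref{assum:resolution-integrity} excludes pre-registering $v_X$ under an adversarial key. Strictly, though, the theorem as stated only needs \Cref{assum:euf-cma} and \Cref{assum:injectivity}, provided resolution of $K_X$ is taken as honest.

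The main obstacle is the freshness and replay issue in the reduction. $\mathsf{Adv}$ may have observed AgentX's earlier valid signatures, and EUF-CMA only forbids forgeries on unqueried messages. My first approach is to make explicit that $checkProof$ is run on a verifier-chosen nonce $c$, as in the paper's example after \Cref{def:proof}. I would then bound the collision probability with previously signed messages by $q/|\mathcal{C}|$, where $q$ is the number of signing queries and $\mathcal{C}$ is the challenge space. If challenges are not fresh, the theorem fails against replay, so I would state nonce freshness as an implicit hypothesis of \Cref{def:proof}.
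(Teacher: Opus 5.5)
Your proposal is correct at the paper's level of rigor and follows essentially the paper's proof: your injectivity step forcing $v_i = v_X$ is the paper's Case~2, and your EUF-CMA reduction, with AgentX's interactions answered by the signing oracle, is its Case~1. Your explicit freshness accounting (the $q/|\mathcal{C}|$ loss for a replayed signature) genuinely tightens the paper's claim that the forger ``succeeds with the same probability.'' That bound, however, presumes $\mathsf{Adv}$ cannot forward the fresh challenge $c^\ast$ to AgentX after seeing it (a relay attack, which the paper's Dolev--Yao setting permits); neither your argument nor the paper's excludes this unless the signed message is bound to the verifier or session.
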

\begin{IEEEproof}
Any such $(v_i, \sigma)$ falls into exactly one of two cases.
 
\textbf{Case 1 ($v_i = v_{\mathsf{AgentX}}$, AgentX's own registered identifier).} By \Cref{def:proof} $\mathsf{checkProof}$ accepts only if $\sigma$ verifies under the public key $K_{\mathsf{AgentX}}$ registered in AgentX's DID document. If $\mathsf{Adv}$ produces such a $\sigma$ without $K_{\mathsf{AgentX}}^{-1}$, then $\mathsf{Adv}$ can be used to build a PPT forger $\mathcal{B}$ against $K_{\mathsf{AgentX}}$ (simulating AgentX's public interactions as signing-oracle queries) that succeeds with the same probability. This directly breaks Assumption~\ref{assum:euf-cma}, so this case occurs with only negligible probability $\varepsilon_{\mathsf{sig}}$.
 
\textbf{Case 2 ($v_i \neq v_{\mathsf{AgentX}}$, some other registered identifier, e.g. $\mathsf{Adv}$'s own $v_{\mathsf{Adv}}$).} Here $\mathsf{checkProof}$ may accept trivially (Adv signs with its own key), but by Assumption~\ref{assum:injectivity}, $\mathsf{idValToEnt}$ is injective, so $\mathsf{idValToEnt}(v_i)$ resolves to whichever entity registered $v_i$ i.e. to $\mathsf{Adv}$ itself (or $\bot$), never to $\mathsf{AgentX}$. This case therefore never satisfies the second conjunct of the acceptance condition, regardless of $\mathsf{Adv}$'s computational power.
 
Since both cases fail except with the negligible probability $\varepsilon_{\mathsf{sig}}$ from Case~1, the overall success probability is negligible.
\end{IEEEproof}

\begin{remark}
We use \emph{Sybil resistance} here specifically for the impersonation sub-case (an adversary assuming a registered agent's identity). Resistance to \emph{mass creation} of new, individually-valid Sybil identities is a distinct property, provided not by this theorem but by the registration-cost argument of \ref{ST:5}. This tightens the case split, where Case~2 was argued to itself satisfy the impersonation predicate. Presenting one's own validly-signed, correctly-resolving identifier is not an impersonation of AgentX. The predicate's second conjunct ($\mathsf{idValToEnt}(v_i) \neq \mathsf{AgentX}$ in the general form
of \Cref{def:impersonation}) is \emph{false} precisely in this case, which is why Case~2 poses no threat to AgentX's identity. Only Case~1 is a live attack surface, and it is closed by EUF-CMA \cite{EUF}.
\end{remark}
 
\begin{theorem}[Temporal and Bounded Authorization Enforcement]
\label{thm:authorization}
Under Assumption~\ref{assum:bft}, for any agent AgentX, owner $u$, action $p$, and time $t$, the smart-contract wallet dispatches $p$ on behalf of $u$ via AgentX at time $t$ if and only if 
\begin{equation*}
   \mathsf{Authorized}(u, \mathsf{AgentX}, p, t) = \mathsf{true}. 
\end{equation*}

\end{theorem}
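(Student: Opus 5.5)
We prove the biconditional by modelling the smart-contract wallet as a deterministic guard function and showing that this guard's acceptance condition is literally the conjunction in \Cref{def:authorization}. Concretely, we fix the wallet's dispatch rule as a predicate $\mathsf{Guard}(u,\mathsf{AgentX},p,t)$ evaluated on the ledger state at the block in which the dispatch transaction is finalized. The rule has four checks:
\begin{itemize}
\item[(i)] an ownership record $(u,\mathsf{AgentX})\in\mathsf{Own}$ is present on-chain, anchored by the ownership VC of Layer~D (\Cref{def:D});
\item[(ii)] a Delegation VC with issuer $\mathrm{DID}^u$, \texttt{credentialSubject} $\mathrm{DID}^{\mathsf{AgentX}}$, permission $p$ and window $[t_{start},t_{end}]$ is recorded in $\mathsf{Del}$;
\item[(iii)] $t_{start}\le t\le t_{end}$;
\item[(iv)] no entry $((u,\mathsf{AgentX},p,t_{start},t_{end}),t_{rev})\in\mathsf{Revoked}$ exists with $t_{rev}\le t$.
\end{itemize}
The wallet dispatches $p$ iff $\mathsf{Guard}$ returns true. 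The proof then splits into soundness (dispatch $\Rightarrow$ Authorized) and completeness (Authorized $\Rightarrow$ dispatch).

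For soundness, we argue that every dispatch is the output of a state transition executed by the contract code. By Assumption~\ref{assum:bft}, with $n\ge 3f+1$ all honest nodes agree on a single canonical execution history. Hence a dispatch appearing in the finalized history must have passed $\mathsf{Guard}$ as evaluated by the honest majority, and it cannot be fabricated by the at most $f$ faulty nodes. Reading (i)--(iv) back against \Cref{def:authorization} term by term gives $\mathsf{Authorized}(u,\mathsf{AgentX},p,t)=\mathsf{true}$. For completeness, if $\mathsf{Authorized}$ holds, the same four facts are present in the canonical state, so the deterministic guard accepts. By liveness of the BFT ledger, the submitted dispatch transaction is eventually finalized and executed. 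We take $t$ to be the finalization time per Assumption~\ref{assum:finality-time}, so that both directions refer to the same clock.

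The main obstacle is the time variable, in particular the interaction between revocation and dispatch. Both directions of the biconditional are sensitive to the ordering of a revocation transaction and a dispatch transaction that land near each other. Our first move is to invoke Assumption~\ref{assum:finality-time}: we interpret both $t$ and $t_{rev}$ as block-finalization times on the canonical chain, and within a single block we use the deterministic transaction order fixed by consensus. Under that reading, ``$\neg\mathrm{Revoked}(u,\mathsf{AgentX},p,t)$'' and check (iv) coincide exactly. The residual case is two pending transactions whose order is undetermined by the model, which Assumption~\ref{assum:finality-time} explicitly excludes. We would isolate it as a remark, leading into the challenge-window corollary, rather than trying to resolve it inside the theorem.

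A secondary point is that (ii) must be evaluated on the VC's signature rather than on a bare record. Otherwise an adversary could inject a forged delegation into $\mathsf{Del}$ and make soundness fail. We would note that this check reduces to Assumption~\ref{assum:euf-cma} for $u$'s key, exactly as in Case~1 of Theorem~\ref{thm:impersonation}. This introduces a negligible error term, so the stated ``if and only if'' should be read as holding except with negligible probability.
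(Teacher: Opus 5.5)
Your proposal is essentially the paper's proof: the wallet guard is deterministic bytecode evaluating exactly $C_1\wedge C_2\wedge C_3\wedge C_4$ on the BFT-canonical state, so Assumption~\ref{assum:bft} yields ($\Rightarrow$), while ($\Leftarrow$) needs an extra honest-relayer/liveness hypothesis, which the paper adds just as you do. Your two refinements go beyond the statement as posed: the paper takes $\mathsf{Own}$, $\mathsf{Del}$ and $\mathsf{Revoked}$ to be the canonical registry state the guard reads, so the biconditional is exact with no EUF-CMA error term (as the Remark after the theorem stresses; if you keep the signature-level reading, the caveat must also cover forged ownership records and forged revocations, not only $\mathsf{Del}$), and it leaves the pending-transaction ordering issue to Assumption~\ref{assum:finality-time} and Corollary~\ref{cor:challenge-window} rather than settling it inside the proof.
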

\begin{IEEEproof}
By \Cref{def:D} and \Cref{def:authorization}, $\mathsf{Authorized}(u, \\\mathsf{AgentX}, p, t)$ is the conjunction of four predicates: \\
$C_1 = [(u,\mathsf{AgentX}) \in \mathsf{Own}]$, \\
$C_2 = [\exists\, t_{start}, t_{end}.\ (u,\mathsf{AgentX},p,
t_{start},t_{end}) \in \mathsf{Del}]$,
$C_3 = [t_{start} \leq t \leq t_{end}]$, \\and 
$C_4 = [\neg \mathsf{Revoked}(u,\mathsf{AgentX},p,t)]$. \\ The wallet's on-chain guard $G(u,\mathsf{AgentX},p,t)$ is smart-contract bytecode that reads the canonical registry state (via the Bitstring Status List lookup for $C_4$) and evaluates exactly $C_1 \wedge C_2 \wedge C_3 \wedge C_4$ before permitting dispatch of $p$.
 
By Assumption~\ref{assum:bft}, all honest nodes agree on a single canonical state and execute $G$ deterministically; no adversary (bounded by the BFT fault threshold $f$) can force $G$'s committed output to diverge from $C_1 \wedge C_2 \wedge C_3 \wedge C_4$ evaluated against that canonical state. Consequently:
 
($\Leftarrow$) If $\mathsf{Authorized} = \mathsf{true}$, all $C_i$ hold, so $G$ evaluates true and — assuming an honest relayer submits the call (liveness) — the transition dispatching $p$ is committed.
 
($\Rightarrow$) If $p$ is dispatched via a canonical, committed transition, that transition only exists because $G$ evaluated true on that state, which by construction requires $C_1 \wedge C_2 \wedge C_3 \wedge C_4$, i.e. $\mathsf{Authorized} = \mathsf{true}$. Conversely, if any $C_i$ is false at $t$ — e.g. $t > t_{end}$ violates $C_3$, or a revocation recorded before $t$ violates $C_4$ — deterministic evaluation forces $G = \mathsf{false}$ and the transition reverts.
 
This establishes the biconditional.
\end{IEEEproof}

\begin{remark}
Unlike Theorems~\ref{thm:coherence} and~\ref{thm:impersonation}, this guarantee is not probabilistic conditional on a cryptographic hardness assumption; it follows from deterministic conjunctive evaluation under BFT consensus (Assumption~\ref{assum:bft}). Its
force lies in ruling out \emph{any} execution path, including malicious relayers or reordering attempts within the fault threshold, that dispatches $p$ while some $C_i$ is false, not in the propositional logic alone.
\end{remark}

\begin{corollary}[Front-Running Mitigation via Challenge Window]
\label{cor:challenge-window}
Suppose every delegated action $p$ authorized under
$(u,mathsf{AgentX},p,t_{start},t_{end}) \in \mathsf{Del}$ above a
materiality threshold is subject to a mandatory challenge window of
$\Delta$ blocks between submission and execution eligibility, where
$\Delta$ exceeds the network's maximum expected finality latency
$\lambda$ with margin ($\Delta > \lambda$). Then any revocation
submitted before an action's challenge window opens is guaranteed,
under Assumption~\ref{assum:bft}, to be reflected in $C_4$ before that action becomes eligible for dispatch -- closing the time-of-check-to-time-of-use race between a pending action and a racing revocation that Theorem~\ref{thm:authorization} alone leaves
open (Assumption~\ref{assum:finality-time}). 
\end{corollary}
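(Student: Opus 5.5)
The plan is to reduce the corollary to Theorem~\ref{thm:authorization} combined with a timing argument on finalized block heights. Assumption~\ref{assum:finality-time} says that $C_4$ is evaluated at block-finalization time. Only the ordering of finalized transactions matters, so I will translate ``submitted before the window opens'' into ``finalized before eligibility.''

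Fix a material action $p$ submitted at block height $s$. It therefore becomes eligible for dispatch no earlier than height $s+\Delta$. Let a revocation $r$ of the corresponding delegation tuple be submitted at some height $s_r \le s$, i.e.\ before the challenge window opens.

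\begin{enumerate}
\item \textbf{Revocation finalizes before eligibility.} Invoking the liveness part of Assumption~\ref{assum:bft}, and the definition of $\lambda$ as the maximum expected finality latency, $r$ is included and finalized at some height
\[
t_{rev} \le s_r + \lambda \le s + \lambda < s + \Delta .
\]
\item \textbf{$C_4$ fails at every eligible time.} By Assumption~\ref{assum:resolution-integrity} (finality-gated reads), the guard $G$ reads only BFT-finalized state. So at every height $t \ge s+\Delta$, the revocation record $((u,\mathsf{AgentX},p,t_{start},t_{end}),t_{rev}) \in \mathsf{Revoked}$ with $t_{rev} < t$ is visible. Hence $\mathsf{Revoked}(u,\mathsf{AgentX},p,t)$ holds and $C_4$ is false.
\item \textbf{The action cannot be dispatched.} Apply the ($\Rightarrow$) direction of Theorem~\ref{thm:authorization}. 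Since $\mathsf{Authorized}(u,\mathsf{AgentX},p,t)=\mathsf{false}$ for all eligible $t$, no canonical committed transition dispatches $p$, and any attempt reverts.
\item \textbf{Adversarial reordering is irrelevant.} A relayer or up to $f$ Byzantine nodes can reorder pending transactions. They cannot, however, make a finalized state lack the revocation, nor make $G$'s committed output diverge from its deterministic evaluation. This closes the time-of-check-to-time-of-use race.
\end{enumerate}

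The main obstacle is step 1. The bound $\lambda$ is stated only as an ``expected'' latency, and BFT safety alone does not guarantee inclusion. I would make it explicit that the corollary relies on a partial-synchrony or liveness bound: every honestly submitted transaction is finalized within $\lambda$ blocks. The margin $\Delta-\lambda>0$ absorbs the off-by-one between submission and inclusion heights. If only a probabilistic bound is available, the guarantee degrades to holding except with the probability that finality exceeds $\Delta$, and I would state it in that form.
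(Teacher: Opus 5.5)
Your proposal is correct and follows essentially the paper's argument: the revocation finalizes within $\lambda$ of its submission and hence before dispatch eligibility at $s+\Delta$, so $C_4$ is false against the finalized state and Theorem~\ref{thm:authorization} forces $\mathsf{Authorized}=\mathsf{false}$ (for the finalized-state step the paper cites Assumption~\ref{assum:finality-time}, which fits better than the identity-resolution clause of Assumption~\ref{assum:resolution-integrity}). Your restriction to revocations submitted at $s_r \le s$, which matches the statement, is exactly what makes $s_r+\lambda \le s+\lambda < s+\Delta$ valid---the paper's proof asserts the bound for any submission time $t_{rev} \le t_0+\Delta$, where $t_{rev}+\lambda < t_0+\Delta$ does not follow from $\Delta>\lambda$---and you are right that $\lambda$ must come from a liveness/synchrony bound, which the safety property in Assumption~\ref{assum:bft} does not by itself supply.
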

\begin{IEEEproof}
An action submitted at (finalized) time $t_0$ becomes eligible for
dispatch only at $t = t_0 + \Delta$. A revocation submitted at any
$t_{rev} \leq t_0 + \Delta$ is, by Assumption~\ref{assum:bft},
finalized on the canonical chain no later than $t_{rev} + \lambda$,
which by hypothesis satisfies $t_{rev} + \lambda < t_0 + \Delta = t$.
Hence, by Assumption~\ref{assum:finality-time}, $C_4$ is evaluated at
dispatch-eligibility time $t$ against a canonical state that already
reflects the revocation, and Theorem~\ref{thm:authorization} applies
with $\mathsf{Authorized}=\mathsf{false}$.
\end{IEEEproof}

\begin{remark}
This is the mechanism, not merely the assumption, that resolves the
MEV/reordering exposure of Theorem~\ref{thm:authorization}: it is not
that reordering cannot happen (a Dolev--Yao-class relayer can still
reorder \emph{pending} transactions freely), but that no reordering
of two \emph{pending} transactions can outrun a $\Delta$-block
mandatory delay once $\Delta$ exceeds finality latency with margin.
We recommend the Operational Architecture (\ref{Op arch}) adopt such a window for all Layer-D delegated actions above a materiality
threshold, consistent with standard optimistic-execution patterns.
\end{remark}

\section{Worked Example: AgentX and Alice under a Dolev--Yao Adversary}
\label{subsec:examples}
 
We illustrate Theorems~\ref{thm:coherence}--\ref{thm:authorization} and
Corollaries~\ref{cor:evasion} and~\ref{cor:challenge-window} acting
together, following AgentX, a DeFi trading bot owned by Alice.
Throughout, the network attacker $\mathsf{Adv}$ is modelled as a
standard \emph{Dolev--Yao adversary}: $\mathsf{Adv}$ has full control
of the network -- it can intercept, delay, reorder, inject, and
replay messages, and can act as a relayer or block proposer within
the Byzantine fault threshold $f$ of Assumption~\ref{assum:bft} --
but cannot invert one-way functions, forge signatures, or corrupt
more than $f$ consensus nodes. That is, Assumptions~\ref{assum:collision},
\ref{assum:euf-cma}, \ref{assum:injectivity}, \ref{assum:bft},
and \ref{assum:resolution-integrity} all hold \emph{against} $\mathsf{Adv}$; only their
network position and message-scheduling power are unconstrained.
 
\subsection{Setup (Layers C and A)}
AgentX generates a keypair $(K_X, K_X^{-1})$ and derives its
identifier $\mathsf{DID}_X = \texttt{did:ethr:0x742d\ldots}$ directly
from $K_X$, satisfying self-certification
(Assumption~\ref{assum:resolution-integrity}(i)). At block $b_0$,
AgentX anchors
\begin{equation*}
\begin{aligned}
  \Phi_{Alg}(\mathsf{AgentX}) = \{(\texttt{modelHash}, \texttt{sha256:4f7a2\ldots}),\ \\
(\texttt{versionTag}, \texttt{v2.1}),\ (\texttt{latencyTarget}, 200\text{ms})\},     
\end{aligned}
\end{equation*}

and Alice's ownership $(\texttt{did:ethr:Alice}, \mathsf{AgentX}) \in
\mathsf{Own}$ is recorded in the same block. Alice delegates trading
rights:
\begin{equation*}
\begin{aligned}
(\texttt{did:ethr:Alice}, \mathsf{AgentX}, \texttt{trade\_upto\_50k},\\
b_0,\ b_0 + 216{,}000) \in \mathsf{Del},
\end{aligned}
\end{equation*}
a 30-day window at a 12-second block time. Per
Corollary~\ref{cor:challenge-window}, any single trade above a
\$10,000 materiality threshold is subject to a mandatory challenge
window of $\Delta = 64$ blocks ($\approx$13 minutes) -- chosen to
exceed the chain's observed finality latency of $\approx$20 blocks.
 
\subsection{Day 15: Coherence Failure (Theorem~\ref{thm:coherence}, Corollary~\ref{cor:evasion})}
Over $W = 30$d, AgentX's historical latency baseline has mean 200ms
and standard deviation $\sigma = 15$ms; using a $3\sigma$ control
limit (Remark following Corollary~\ref{cor:evasion}),
$\tau_{\texttt{latency95}} = 45$ms. The platform observes
\begin{equation*}
\begin{aligned}
\Phi_{Beh}(\mathsf{AgentX}, 30\text{d}) = \{(\texttt{latency95}, 280\text{ms}),\ \\
(\texttt{refusalRate}, 0.04)\}.
\end{aligned}
\end{equation*}
Since $|280 - 200| = 80\text{ms} > \tau_{\texttt{latency95}} = 45\text{ms}$,
at least one monitored attribute exceeds tolerance; by
Corollary~\ref{cor:evasion} evasion has \emph{not} occurred, so
$\pi(\Phi_{Alg}, \Phi_{Beh}) = \mathsf{false}$ and, by
Theorem~\ref{thm:coherence}, $\neg\mathsf{Bound}(\mathsf{AgentX},\\
30\text{d})$ is raised as an automated alert to Alice. (Had the
deviation instead been, say, 30ms -- within $\tau$ -- this specific
$\pi$ would not flag it; that is the honest boundary of the
guarantee stated in the Remark on Corollary~\ref{cor:evasion}, not a
claim of universal detection.)
 
\subsection{Same day: An Impersonation Attempt (Theorem~\ref{thm:impersonation})}
While the alert is under review, a rival bot $\mathsf{AgentY}$
(controlled by $\mathsf{Adv}$) attempts to hijack AgentX's trading
reputation with a counterparty.
\begin{itemize}
\item \textbf{Case 1.} $\mathsf{AgentY}$ presents $\mathsf{DID}_X$
with a signature it attempts to forge. Since $\mathsf{Adv}$ does not
hold $K_X^{-1}$, producing a $\sigma$ that verifies under $K_X$ is an
existential forgery; by Assumption~\ref{assum:euf-cma} this succeeds
with only negligible probability, and $\mathsf{checkProof}$ rejects.
\item \textbf{Case 2.} $\mathsf{AgentY}$ instead presents its own
validly-registered $\mathsf{DID}_{Adv}$ (self-certified from
$K_{Adv}$) with a legitimately-signed $\sigma_{Adv}$, and separately
\emph{claims}, off-chain to the counterparty, to be AgentX.
$\mathsf{checkProof}$ accepts trivially -- but by
Assumption~\ref{assum:resolution-integrity}, $\mathsf{DID}_{Adv}$
could not have been made to collide with or shadow $\mathsf{DID}_X$
without breaking Assumption~\ref{assum:collision}, and the
counterparty's verifier resolves
$\mathsf{idValToEnt}(\mathsf{DID}_{Adv})$ against the BFT-finalized
registry, returning $\mathsf{AgentY}$, never AgentX. The claim is
rejected.
\end{itemize}
Both attack surfaces close: Case~1 by
Assumption~\ref{assum:euf-cma}, Case~2 by
Assumption~\ref{assum:resolution-integrity}.
 
\subsection{Day 15--16: Revocation Under Contention (Theorem~\ref{thm:authorization}, Corollary~\ref{cor:challenge-window})}
Based on the coherence alert, Alice decides to revoke AgentX's
trading rights. Simultaneously, suppose AgentX -- now suspected of
running tampered weights -- has an in-flight \$50,000 trade
transaction sitting in the mempool, submitted at block $b_1$ with
dispatch eligibility at $b_1 + \Delta$. Alice's revocation is
submitted at $b_1 + 5$ and finalizes at $b_1 + 5 + 20 = b_1 + 25$
(finality latency $\approx$20 blocks, Assumption~\ref{assum:bft}).
Because $b_1 + 25 < b_1 + \Delta = b_1 + 64$, the revocation is
finalized \emph{before} the trade becomes eligible for dispatch,
regardless of how $\mathsf{Adv}$ -- acting as a relayer or block
proposer within the fault threshold -- attempts to reorder the two
pending transactions in the interim: Assumption~\ref{assum:bft}
guarantees no execution path commits the trade against a canonical
state that does not yet reflect the revocation. At $t = b_1 +
\Delta$, the guard evaluates
\begin{equation*}
\begin{aligned}
    C_4 = \neg\mathsf{Revoked}(\texttt{Alice}, \\\mathsf{AgentX},
\texttt{trade\_upto\_50k}, t) = \mathsf{false},
\end{aligned}
\end{equation*}
so $G$ evaluates false and the transaction reverts -- exactly as
Corollary~\ref{cor:challenge-window} predicts. (Had Alice instead
revoked at $b_1 + 50$ -- within 14 blocks of $b_1+\Delta$ but after
the trade's own finality -- the race would instead favor whichever
transaction finalizes first; this is why $\Delta$ must exceed the
network's finality latency \emph{with margin}, as Corollary~\ref{cor:challenge-window}
requires, not merely on average.)
 
\subsection{Operational Note: Reconciling Finality with Machine-Speed Execution}
Assumption~\ref{assum:resolution-integrity}(ii) requires
finality-gated reads, but this does not require re-resolving
identity and re-checking authorization on \emph{every} micro-decision
AgentX makes. Consistent with the Verification Service of \ref{subsub:coremanage}, a verifier resolves identity and authorization once per
delegation window or per challenge-window boundary, caches the
result for a short, explicitly bounded validity period, and
invalidates the cache on a push event from the revocation registry
rather than polling finality on every action. High-frequency actions
inside an already-open, uncontested window proceed at cache speed;
only delegation setup, revocation, and challenge-window-boundary checks pay the finality-latency cost -- precisely the points at
which Theorems~\ref{thm:impersonation} and~\ref{thm:authorization}
apply.
 
This walkthrough shows the four layers reinforcing one another
operationally, as argued in \ref{Op arch}: a Layer~B anomaly (coherence
failure) triggers a Layer~D governance action (revocation), whose
correctness against a racing action depends on Layer~C's finalized
resolution (Assumption~\ref{assum:resolution-integrity}) and Layer~A's
cryptographic anchoring -- turning the abstract quadrant relationship
of Figure~1 into a concrete, adversarially-contested timeline.

\section{Security Analysis}
\label{sec:9}
In this section, we demonstrate how the ABCD Framework mitigates the specific threats identified in our Threat Model (\Cref{sec: Threat}). 

\vspace{1mm}
\noindent \textbf{Spoofing/Sybil Attacks} (\ref{ST:1}): We propose strict Entity Partitioning ($E_d^{AI}$ vs $E_d^{U}$) and Cryptographic Proof of Control enforced by Layer C.

\textit{\textbf{Analysis:}}  Traditional systems bind acceptance to key possession alone via $\mathsf{checkProof}$, with no independent provenance (Layer~A) or behavioural (Layer~B) check. A stolen, cloned, or freshly-generated key is therefore sufficient on its own to authenticate, regardless of how many other identities the same operator already controls. Our framework mitigates this by enforcing the Impersonation condition defined in Definition~\ref{def:impersonation}. The check $checkProof_{d}((i_{DID}^{AI},v_{i}),\sigma)$ ensures that an attacker cannot simply replay a DID string. Furthermore, by rigorously partitioning the set of entities $E_{d}^{u} \cap E_{d}^{AI} = \emptyset$, the model prevents the identity laundering where human credentials are used to spawn unaccountable bot swarms.

\vspace{1mm}
\noindent \textbf{Tampering/Model Swapping} (\ref{ST:2}): We propose a behavioural Coherence check that binds runtime metrics to Algorithmic Fingerprints ($\Phi_{Alg}(x)$) via Layer A/B.

\textit{\textbf{Analysis}}: This addresses the black box vulnerability where a valid DID masks a malicious model. We enforce this with the security predicate
$\mathsf{Bound}_{AI}(x,W)$ (Definition~\ref{def:coherence}). If an agent attempts to swap its underlying model for a cheaper or malicious variant, the resulting runtime behaviour $\Phi_{Beh}(x, W)$ (e.g., inference latency or decision variance) will statistically deviate from the signed $\Phi_{Alg}(x)$. This triggers a coherence failure ($\pi = false$).

\vspace{1mm}
\noindent \textbf{Repudiation/Liability Evasion} (\ref{ST:3}): We propose non-repudiable links of delegation verified by Layer D.

\textit{\textbf{Analysis:}} To prevent human owners from disowning the economic damage caused by their agents, we utilize the ownership relation $(did:ethr:Alice, AgentX) \in Own$. Because this link is cryptographically signed by the owner's private key and anchored on the immutable VDR, it provides mathematical non-repudiation. An owner cannot retroactively deny the link after a catastrophic trade, as the chain of custody is preserved in the $A^{D}$ attributes.

\vspace{1mm}
\noindent \textbf{Information Disclosure} (\ref{ST:4}): We propose On-chain hash commitments combined with encrypted Off-chain storage (e.g. IPFS).

\textit{\textbf{Analysis:}} We decouple verification from data visibility. The blockchain stores only the SHA-256 hashes of the model architecture ($modelHash \in A^{A}$), ensuring integrity. The sensitive weights and configuration files are stored off-chain.

\vspace{1mm}
\noindent \textbf{Denial of Service} (\ref{ST:5}): We propose economic friction layers (e.g., gas fees) within Layer A.

\textit{\textbf{Analysis:}} DoS attacks rely on the cost-free generation of identities. By requiring on-chain registration for Layer A (Algorithmic Fingerprinting), we impose a financial cost (gas) on the creation of every new agent identity. This economic barrier makes high-volume spam registration prohibitively expensive for attackers.

\vspace{1mm}
\noindent \textbf{Elevation of Privilege} (\ref{ST:6}): We propose cryptographically enforced authorization constraints ($A^{D}$) via Layer D.

\textit{\textbf{Analysis}}: In this model, privilege elevation manifests as exceeding delegated \emph{scope} along any bound dimension such as permission $p$, time window $[t_{start},t_{end}]$, or budget where encoded as a permission. These are all enforced identically by Definition~\ref{def:authorization}'s conjunctive predicate and Theorem~\ref{thm:authorization}'s deterministic evaluation. Specifically, permissions are not static; they are strictly bounded by the tuple $(p, t_{start}, t_{end})$. The system automatically rejects any action where $t > t_{end}$, ensuring that even if an agent is compromised, the attacker’s window of opportunity is strictly limited by the pre-defined delegation logic.

\section{Discussions \& Limitations}
\label{sec: Disc}
The identity gap of decentralized Agentic systems is formalised by the ABCD Framework, with Algorithmic provenance, Behavioural evidence, Cryptographic anchoring as well as Delegated liability. Yet, the implementation of this calculus in permissionless infrastructure has certain trade-offs.
\subsection{Statistical Nature of Verification}
\label{sec:stat_verification}
Unlike cryptographic signatures which provide binary verification (valid or invalid), Layer B relies on the Coherence Policy $\pi$ to detect anomalies between declared Algorithmic Fingerprints ($\Phi_{Alg}$) and observed behaviour ($\Phi_{Beh}$). This introduces a fundamental limitation: verification is \textit{probabilistic} rather than \textit{deterministic}. The system acts as a statistical detector subject to false positives (e.g., benign performance drift due to network latency) and false negatives (e.g., adaptive adversaries minimizing divergence). The reliability of this trust model is therefore strictly dependent on the observation window $W$. A window that is too short increases noise and false alarms, while a window that is too long delays the detection of model swapping attacks.


\subsection{Privacy-Transparency Trade-offs}
\label{sec:privacy_tradeoffs}
While off-chain storage (e.g., IPFS) prevents the direct disclosure of model weights, a fundamental tension remains between public verification and intellectual property (IP) protection. To prove behavioural integrity ($\Phi_{Beh}$), an agent must publicly disclose performance metrics (e.g., latency, refusal rates, decision traces). In competitive domains like DeFi, these behavioural traces can act as a side-channel. This can allow competitors to reverse-engineer proprietary strategies. For example, a consistent latency pattern might reveal the complexity of the underlying model. Future work must explore Zero-Knowledge Proofs (ZKPs) to allow agents to prove adherence to $\pi$ without revealing the raw underlying metrics.

\subsection{Attribution vs. Enforcement}
Layer D successfully binds agents to human principals ($E^{U}$) cryptographically via the $\mathsf{Own}$ relation, solving the problem of \textit{attribution} (ST3). However, it does not solve \textit{enforcement}. If a pseudonymously owned agent causes financial damage, the protocol can prove who is liable, but it cannot compel restitution outside of pre-locked collateral. The framework provides the evidentiary basis for legal action but relies on external jurisdiction for execution.

\section{Conclusion}
\label{sec: Conc}
AI agents that act independently can disrupt established identity paradigms based on biological anchors. In this paper, we have presented the ABCD Framework, which is a domain-sensitive calculus that orthogonalises the conception of AI identity into Algorithmic provenance, Behavioral integrity, Cryptographic anchoring and Delegated liability.
Here, human and AI entities were formally partitioned ($E^U \cap E^{AI} = \emptyset$) and three security theorems were formally proven under explicit cryptographic and systems assumptions. ABCD is not a definitive solution to the economic dilemma of autonomy, but a foundational standard. The opaque deployments are replaced with auditable guarantees, by binding what an agent is ($\Phi_{Alg}$), how it acts ($\Phi_{Beh}$) and who is accountable ($E^U$). Such a deterministic identity layer is a security requirement as Multi-Agent Systems become more and more the mediator of value.

\bibliographystyle{IEEEtran}
\bibliography{mainbib}

\end{document}